\documentclass{article}
\usepackage{arxiv}

\usepackage{fancyhdr}
\pagestyle{fancy}
\fancyhf{}
\fancyhead[L]{Computing Tree Structures in Anonymous Graphs via Mobile Agents}
\fancyhead[R]{\thepage}

% Don't override \proof or \endproof unless absolutely necessary — it breaks amsthm
% \let\proof\relax
% \let\endproof\relax

\usepackage[utf8]{inputenc}
\usepackage[numbers, sectionbib]{natbib}
\usepackage{amsmath, amssymb, amsfonts, amsthm}
\usepackage{mathtools}
\usepackage{graphicx}
\usepackage{color, xcolor}
\usepackage{float}
\usepackage{hyperref}
\usepackage{comment}
\usepackage{algorithm, algpseudocode}
\usepackage{enumerate}
\usepackage{enumitem}
\usepackage{subfigure}
\usepackage{times}
\usepackage{sidecap}
\usepackage{tcolorbox}
\usepackage{blindtext}
\usepackage[symbol]{footmisc}

\usepackage{boxedminipage}
\usepackage{xspace}

\sloppy

% Theorem Environments
\theoremstyle{plain}
\newtheorem{theorem}{Theorem}[section]
\newtheorem{lemma}[theorem]{Lemma}

\theoremstyle{definition}

\newtheorem{observation}[theorem]{Observation}

\newtheorem{remark}[theorem]{Remark}

% Named unnumbered definitions or custom environments
\newtheorem*{def*}{Definition}
\newtheorem*{prfthm*}{Proof of Theorem}

% Custom colored comments

\newcommand{\remove}[1]{}

% BibTeX command
\AtBeginDocument{%
  }

\author{
 Prabhat Kumar Chand \\
  Indian Statistical Institute\\
  Kolkata,\\
  India. \\
  \texttt{pchand744@gmail.com} \\
  %% examples of more authors
   \And
 Manish Kumar \\
  Indian Institute of Technology\\
  Madras,\\
  India.\\
  \texttt{manishsky27@gmail.com} \\
  \And
 Anisur Rahaman Molla \\
  Indian Statistical Institute\\
  Kolkata,\\
  India. \\
  \texttt{molla@isical.ac.in} \\
  %% \AND
  %% Coauthor \\
  %% Affiliation \\
  %% Address \\
  %% \texttt{email} \\
  %% \And
  %% Coauthor \\
  %% Affiliation \\
  %% Address \\
  %% \texttt{email} \\
  %% \And
  %% Coauthor \\
  %% Affiliation \\
  %% Address \\
  %% \texttt{email} \\
}

\begin{document}

%\title{Agent-Based Triangle Counting and its Applications in Anonymous Graphs\thanks{An \textit{Extended Abstract} (2 pages) of this manuscript appeared in the proceedings of AAMAS 2024\cite{aamas_ea}}}

\title{Computing Tree Structures in Anonymous Graphs via Mobile Agents}

\maketitle
\sloppy 
\begin{abstract}
%\mki{The Abstract can be checked.}
Minimum Spanning Tree (MST) and Breadth-First Search (BFS) tree constructions are classical problems in distributed computing, typically studied in the message-passing model, where static nodes communicate via messages. This paper studies the MST and BFS tree in an agent-based network, in which computational devices are modeled as mobile agents that explore a graph and perform computations. Each node in the graph serves as a container for these mobile agents, and communication occurs between agents when they move to the same node. We consider the scenario where $n$ agents are dispersed (each node contains one agent) on the nodes of an anonymous, arbitrary $n$-node, $m$-edge graph $G$. The goal is for the agents to construct these tree structures from the dispersed configuration, ensuring that each tree-edge is recognized by at least one of the endpoint agents. The objective is to minimize the time to construct the trees and memory usage per agent. Following the literature we consider the synchronous setting where each agent performs its operation synchronously with others and hence the time complexity is measured in rounds. Crucially, we aim to design solutions where agents have no prior knowledge of any graph parameters such as $n, m, D, \Delta$, etc., where $D$ and $\Delta$ are the diameter and maximum degree of the graph. There exists a deterministic solution that constructs a BFS tree in $O(D\Delta)$ rounds with $O(\log n)$ bits of memory per agent in the agent-based network with a priori knowledge of the root. In this paper, we present a deterministic algorithm that constructs the BFS tree in $O(\text{min}(D\Delta, m\log n)+n\log n+\Delta \log^2 n)$ rounds with each agent using only $O(\log n)$ bits without any prior knowledge of the root. In the process of knowing the root, we solve two problems, namely, leader election and minimum spanning tree. We elect a leader agent (among the agents) and construct the MST in $O(n\log n+\Delta \log^2 n)$ rounds and $O(\log n)$ bits of memory. The prior known results for leader election are $O(m)$ rounds and $O(\log^2 n)$ bits of memory per agent while MST construction requires $O(m+n\log n)$ rounds and $O(\text{max}(\Delta, \log n) \log n)$ bits of memory per agent. Our results are a significant improvement over the prior results since memory usage is optimal and leader election and MST have time almost linear in $n$. In our algorithms, each agent is assumed to know the value of $\lambda$, defined as the maximum identifier among all agents. For analysis, we assume that $\lambda$ is bounded by a polynomial in $n$, i.e., $\lambda \leq n^c$ for some constant $c \geq 1$.

%\manish{Above is a bit detailed and tentative abstract, we may remove/add some of the parts.}
%This paper investigates the complexity of electing a leader and constructing a Minimum Spanning Tree (MST) and Breadth-First-Search (BFS) tree in an arbitrary, connected anonymous graph. The BFS construction methodology improves upon the BFS construction method presented in \cite{CKM24} by eliminating the need for agents to have a knowledge of a predetermined root node. In addition, we develop a algorithm using which the agents can meet its intended neighbor while performing independent tasks. Unlike previous approaches \cite{aamas_ea, chand23}, our method requires no knowledge of the graph’s maximum degree, $\Delta$, and instead operates solely with an upper bound on agent identifiers, denoted by $\Lambda$. This approach enables agents to dynamically choose and meet particular neighbors without needing to interact with all adjacent nodes. We use this new communication mechanism to facilitate both MST construction and leader election. The final BFS tree is then established with the elected leader agent serving as the root.
\keywords{ Distributed Graph Algorithms \and Leader Election \and Minimum Spanning Tree \and Breadth First Search Tree \and Mobile Agents \and Mobile Robots \and Autonomous Agents}

\end{abstract}

\section{Introduction}\label{sec: introduction}
Tree structures are fundamental in computer science, enabling efficient organization and retrieval across domains such as databases, networking, and AI. Among them, Breadth-First Search (BFS) Trees and Minimum Spanning Trees (MST) are widely used in both classical and distributed systems. BFS trees support shortest-path discovery in unweighted graphs~\cite{local_map15}, facilitating applications in networking, pathfinding, and distributed computing. MSTs and BFS trees are particularly useful in decentralized systems where mobile agents perform local computations, minimizing communication overhead—a key advantage in scenarios like disaster response, urban management, and military operations.  

Agent-based models have proven effective for memory-efficient graph exploration~\cite{local_map15}, subgraph enumeration~\cite{directed22}, and autonomous dispersion~\cite{kshemkalyani19,kshemkalyani22}. Mobile agents are widely applied in network exploration, from underwater navigation~\cite{CGZ2021} to network-centric warfare~\cite{LSP2018}, social network analysis~\cite{ZSW2018, social1}, and search-and-rescue missions~\cite{sar1, sar2, sar3, sar4}. In Wireless Sensor Networks (WSNs), drone-based agents enhance sensor deployment in remote areas~\cite{wsn0, wsn1, wsn2, wsn3}.  

Recent studies explore agent-based solutions for maximal independent sets (MIS)~\cite{pramanick2023filling, pattanayak24}, small dominating sets~\cite{chand23}, triangle counting~\cite{aamas_ea}, BFS tree construction~\cite{CKM24}, and message-passing simulations~\cite{KKMS24}. This paper uses a mechanism for agents to meet their neighbours by exploiting their ID bits without relying on graph parameters, enabling improved MST construction over~\cite{KKMS24} and enhancing BFS tree formation from~\cite{CKM24}. Techniques that leverage agent identifiers for coordination and traversal have indeed been explored in the literature, particularly in deterministic rendezvous and treasure hunt problems using strongly universal exploration sequences (UXS)~\cite{talg_14}. While our proposed protocol shares the high-level idea of employing agent labels, it differs in both the problem setting and the manner in which labels are used. Specifically, our protocol focuses on enabling agents to meet their neighbors within a designated round window in an anonymous graph, using labels to coordinate local interactions rather than global exploration.

\subsection{Our Contribution}

%\pci{May be checked!}
%\mki{Working.}
Recently, Chand {\em et al.}~\cite{CKM24} studied BFS tree construction with various relevant problems in the agent-based model. Their main result shows a deterministic BFS tree construction in $O(D\Delta)$ rounds and uses $O(\log n)$ bits of memory at each agent, where $D$ and $\Delta$ are the diameters and maximum degree of the graph $G$, respectively. Kshemkalyani {\it et al.}~\cite{KKMS24} recently provided the first deterministic algorithm for leader election in the agent-based model, which elects the leader in $O(m)$ time with $O(n\log n)$ bits at each agent. With the help of the elected leader, they constructed the MST in $O(m+n \log n)$ time with $O(n\log n)$ bits at each agent. The algorithm achieves so without relying on any knowledge (neither exact nor an upper bound) on graph parameters, such as $n$ (the network size and also the number of agents), $\Delta$ (the maximum degree of $G$), and $D$ (diameter of $G$). Similarly, the algorithms in this paper do not require knowledge of any graph parameters such as $n$,$\Delta$, etc. Each agent only knows the value of $ \lambda $, a parameter that depends solely on the agent identifiers and is independent of the graph structure. For analysis purposes, we assume that agent IDs are bounded by $ n^c $ for some constant $ c \geq 1 $, which ensures that each ID can be represented using $ O(\log n) $ bits. This allows the algorithms to operate without requiring agents to have prior knowledge of $ n $. The graph $ G $ is anonymous—only agents have unique IDs. In this paper, along the line of work of~\cite{CKM24} and \cite{KKMS24}, we prove the following theorems.

%\manish{I have mentioned all the theorems here. We may remove them from the other sections and refer from here.}

\textbf{Theorem~\ref{theorem: MST} (MST)}
     Given a dispersed configuration of $n$ agents with unique identifiers positioned one agent at every node of $n$ nodes, $m$ edges, maximum degree $\Delta$, and a graph $G$ with no node identifier. Then, a deterministic algorithm constructs an MST in $O(\Delta \log^2 n+n\log n)$ rounds and $O(\log n)$ bits per agent. The algorithm requires no prior knowledge of any graph parameters and assumes that each agent knows only $\lambda$, the maximum among all agent identifiers.
     {\bf (Section~\ref{sec: MST_construction})}

Theorem \ref{theorem: MST} is a significant improvement over the only previously known result for MST construction in the agent-based model due to Kshemkalyani {\it et al.} \cite{KKMS24}. In fact,  the memory complexity is optimal since any algorithm designed in the agent-based model with $n$ agents needs $\Theta(\log n)$ bits per agent~\cite{KshemkalyaniMS19,Augustine:2018} and the time complexity has improved from $O(m+n\log n)$ rounds. Besides its merits regarding improved time and memory complexities for an important problem, the root of the MST provides an immediate result for leader election, and we have the following theorem.

\textbf{Theorem~\ref{theorem: Leader} (Leader Election)}
     Given a dispersed configuration of $n$ agents with unique identifiers positioned one agent at every node of $n$ nodes $m$ edges, maximum degree $\Delta$ graph $G$ with no node identifier. Then, there is a deterministic algorithm that elects a leader in $O(\Delta \log^2 n+n\log n)$ rounds and $O(\log n)$ bits per agent. The algorithm requires no prior knowledge of any graph parameters and assumes that each agent knows only $\lambda$, the maximum among all agent identifiers.
 {\bf (Section~\ref{sec: MST_construction})}

The elected leader, with the help of the MST algorithm, eventually works as the root of the BFS tree. With the help of the root, we developed an algorithm with the following theorem.

\textbf{Theorem~\ref{theorem: BFS} (BFS Tree)}
    Given a dispersed configuration of $n$ agents with unique identifiers positioned one agent at every node of $n$ nodes $m$ edges, maximum degree $\Delta$, and diameter $D$ graph $G$ with no node identifier. Then, there is a deterministic algorithm that constructs a BFS tree in $O(\min(D\Delta, m \log n) + n \log n + \Delta \log^2 n)$ rounds and $O(\log n)$ bits per agent. The algorithm requires no prior knowledge of any graph parameters and assumes that each agent knows only $\lambda$, the maximum among all agent identifiers.
    {\bf (Section~\ref{sec: BFS_improved})}

Theorem~\ref{theorem: BFS} enhances the results for graphs with a high diameter ($D$) and a high maximum degree ($\Delta$), which are particularly significant as they balance long-range connectivity with local hub efficiency. These properties make such graphs valuable in various domains, including network design and communication, where they contribute to scalability and delay-tolerant networks; social networks, where they influence information spread; and computational applications, where they present challenges in graph traversal algorithms.

To establish the above theorems, we introduced a protocol, termed \texttt{Meeting with Neighbor()} (details in Section~\ref{sec: Procedure_MN}), to systematically enable agents to meet each of their neighbors within $O(\log n)$ time per neighbor. This protocol operates without requiring any graph-specific parameters and only needs knowledge of the ID of the highest ID agent and therefore allows computation between neighboring agents without using any knowledge of graph parameters, unlike previous approaches~\cite{chand23,aamas_ea,CKM24}. Using \texttt{Meeting with Neighbor()}, we construct a BFS tree in an arbitrary anonymous graph. To facilitate this, agents first identify a root and gather essential graph parameters such as $m$ and $\Delta$ by constructing a Minimum Spanning Tree, after which the Breadth First Search Tree is constructed.  Table~\ref{tbl: comparative_analysis} lists and compares all the established results with the previous results. 

While our graph $G$ is unweighted, we adapt our approach from the GHS algorithm due to its inherently parallel structure, which offers a distinct advantage over the existing agent-based MST approach in~\cite{KKMS24} (See Table~\ref{tbl: comparative_analysis}). Although \cite{KKMS24} also utilizes a variant of the GHS algorithm, it operates sequentially. In addition to an independent MST construction study using agents, the GHS algorithm allows for the construction of a spanning tree with a root. This is therefore advantageous than the spanning tree construction method used in~\cite{CKM24}.

\subsection{Challenges, Techniques and High-Level Ideas}

In most agent-based graph algorithms, particularly in~\cite{chand23,pattanayak24,aamas_ea}, agents need to know $\Delta$, the maximum degree of the graph, to coordinate meetings effectively between adjacent agents, especially when operating autonomously. This knowledge of $\Delta$ allows agents to create a window in some multiples of $\Delta$ rounds, ensuring that agents meet with all their neighbors within this timeframe. However, our approach overcomes this requirement by employing a new ID system, where agents concatenate a copy of their complemented padded ID bits. This mechanism guarantees that an agent $r_i$ seeking to meet its neighbor $r_j$ can do so without needing agents to know $\Delta$. This methodology allows an agent $r_u$ to meet any neighboring agent within $O(\log n)$ rounds. The agents only require prior knowledge of the highest ID among the agents to get a bound on the ID length of the $n$ agents. The maximum ID across the $n$ agents is denoted as $\lambda$, so the IDs of the agents fit within a $\log\lambda$ length. 

To construct a spanning tree, we adapt the distributed GHS Minimum Spanning Tree (MST) algorithm for our mobile agent model~\cite{ghs83}. The first challenge is assigning distinct weights to the graph’s edges to ensure the existence of a unique minimum spanning tree (a requirement for the GHS algorithm). To assign a distinct weight to an adjacent edge, the agents use their ID and port numbering of the outgoing edge. Each agent computes the weight of its adjacent edges deterministically whenever it is required (without storing these weights)  based on the IDs of the agents across the edge and their port numbers. The spanning tree construction works similarly to the GHS algorithm, where initially there are $n$ separate tree fragments. In each phase, these fragments compute the Minimum Weight Outgoing Edge (MWOE) and use it to merge or absorb other fragments. The final spanning tree is achieved when no further MWOE exists, and all nodes belong to a single fragment. A critical challenge arises during the MWOE selection. An agent who discovers a new outgoing edge must send this information to its fragment leader via the tree so that the leader can calculate the minimum weight. The leader can then gather all the edge weights and send the MWOE throughout the tree. However, this process requires the agents to move sequentially to their children to propagate this information. To manage this, agents keep moving to and fro between their home nodes and parent nodes periodically, allowing those at the same level to collect the MWOE from their parents in a single round. Once the MWOE is identified, the agents use these edges to merge different fragments.  Upon completing the spanning tree construction, the leader of the final fragment, designated as $r^\star$, becomes the leader of all the $n$ agents. Through the MST, agents also gather the necessary graph parameters. The BFS construction now begins from the leader as the root.  
% (see details in Algorithm~\ref{MWOE Selection}), (see details in Algorithm~\ref{MWOE Calculation by Leader})
The BFS construction can be completed in $O(D\Delta)$ rounds as demonstrated in~\cite{CKM24}. However, we introduce another approach that completes the BFS construction in $O(m \log n)$ rounds by utilizing the MST built in the previous step. This MST enables the agents to gather important global parameters, such as $m$ and $\Delta$. By integrating both BFS approaches, our algorithm achieves an overall round complexity of $O(\min(D\Delta, m \log n))$ for BFS construction, offering improved time complexity depending on the graph's structure.\\

\section{Model and Definition}\label{sec:model}

%\mki{This Section can be checked.}
%\subsection{Model}
\textbf{Graph: }We have an underlying graph $G(V,E)$ that is connected, undirected, unweighted and anonymous with $|V| = n$ nodes and $|E| = m$ edges. Nodes of $G$ do not have any distinguishing identifiers or labels. These nodes do not possess any memory and hence cannot store any information. The degree of a node $v\in V$ is denoted by $\delta(v)$ and the maximum degree of $G$ is $\Delta$. Edges incident on $v$ are locally labeled using port numbers in the range $[0,\delta(v)-1]$. We denote a port number that connects the nodes $u$ to $v$ by $p_{uv}$. $p_{uv}$ represents the edge with the outgoing port number of node $u$ and the incoming port number of node $v$. In general, $p_{uv}\neq p_{vu}$. The edges of the graph serve as \emph{routes} through which the agents can commute. Any number of agents can travel through an edge at any given time. We also consider a weighted graph for the MST problem, where the edge weights are positive numbers bounded by $O(n^c)$ for some constant $c\geq 1$. The edge weights are associated with the port numbers of the respective edges. An agent at a node $v$ can see the weights of the edges adjacent to $v$.\\

\noindent\textbf{Mobile Agents: }We have a collection of $n$ agents enumerated as $\mathcal{R} = \{r_1,r_2,\dots,r_n\}$ residing on the nodes of the graph with each having a unique ID $\in$ $[0,n^c]$ for some constant $c\geq 1$, and has $O(\log n)$ bits of memory to store information. We assume that the highest ID among the $n$ agents is known to all agents, denoted by $\lambda$ with ($\lambda\leq n^c$)\footnote{We use $O(\log n)$ instead of $O(\log \lambda)$ throughout the paper in theorem and result formation to ensure the results are expressed using graph parameters. This also helps to show the direct comparison with the existing results.}. An agent retains and updates its memory as needed. Two or more agents can be present (\emph{co-located}) at a node or pass through an edge in $G$. However, an agent is not allowed to stay on an edge. An agent can recognize the port number through which it has entered and exited a node. The agents do not have any visibility beyond their (current) location at a node. An agent at a node $v$ can only realize its adjacent ports (connecting to edges) at $v$. Only the collocated agents at a node can sense each other and exchange information. An agent can exchange all the information stored in its memory instantaneously. We consider that $n$ agents start from a complete {\em dispersed initial configuration}\footnote{For other starting configurations of agents, dispersion is first achieved in $O(n\log^2n)$ rounds using ~\cite{sudo24}.} such that each node of the graph $G$ has exactly one agent.  \\\\

\noindent \textbf{Communication Model: }We consider a synchronous system where the agents are synchronized to a common clock and the {\em local communication} model, where, only co-located agents (i.e., agents at the same node) can communicate among themselves. In each round, an agent $r_i$ performs the $Communicate-Compute-Move$ $(CCM)$ task-cycle as follows: (i) {\em Communicate:} $r_i$ may communicate with other agents at the same node, (ii) {\em Compute:} Based on the gathered information and subsequent computations, $r_i$ may perform all manner of computations within the bounds of its memory, and (iii) {\em Move:} $r_i$ may move to a neighboring node using the computed exit port. We measure the complexity in two metrics, namely, time/round and memory. The {\em time complexity} of an algorithm is the number of rounds required to execute the algorithm. The {\em memory complexity} is measured w.r.t. the amount of memory (in bits) required by each agent for computation.

\subsection{Problem Statements}
Consider an $n$-node simple, arbitrary, connected, and anonymous graph $G(V,E)$. Assume that $n$ autonomous agents are initially placed on the $n$ nodes of the graph, with exactly one agent per node (\emph{dispersed initial configuration}). The objective is to solve the following problems and analyze their computational complexity:  

\begin{enumerate}
    \item \emph{Construction of a Minimum Spanning Tree (MST):} To construct an MST of $G$ such that each edge belonging to the MST is known to at least one of its endpoint agents.
    
    \item \emph{Leader Election:} To designate a unique leader among the $n$ mobile agents.
    
    \item \emph{Construction of a Breadth-First Search (BFS) Tree:} To construct a BFS tree of $G$ such that each edge in the BFS tree is known to at least one of its endpoint agents.
\end{enumerate}

% \begin{definition}
%     Consider an arbitrary, anonymous, simple, connected graph $G$ with $n$ nodes. We analyze the complexity of constructing a Minimum Spanning Tree (MST) followed by a Breadth-First Search (BFS) tree in $G$ using $n$ mobile agents. These agents begin in a dispersed initial configuration and have knowledge only of $\lambda$, an upper bound on the ID length of the $n$ agents.
% \end{definition}

\section{Related Work}\label{sec: related_work}

%\mki{This section can be checked.}
\begin{table}[!t]
\centering \footnotesize
\begin{tabular}
%{|P{4cm}|P{12cm}|}
%{|P{3.45cm}|P{2.00cm}|P{5.3cm}|P{3.25cm}|}
{|p{3.5cm}|p{1.75cm}|p{3.25cm}|p{3.25cm}|}

%\hline
%\multicolumn{4}{|c|} {\bf Comparing the existing and developed results}\\
\hline
 {\bf Algorithm} & {\bf Knowledge} & {\bf Time} & {\bf Memory/agent}\\
\hline
%&&&\\
\multicolumn{4}{|c|} {\bf Leader Election}\\
\hline
%\textbf{Leader Election}&&&\\
Section~\ref{sec: MST_construction}& $\lambda$ & $O(n \log n+ \Delta \log^2 n)$ & $O(\log n)$\\
%&&&\\
Kshemkalyani {\it et al.}~\cite{KKMS24} & $-$ & $O(m)$ & $O(\log^2 n)$\\
%&&&\\
%\hline
%\arrayrulecolor{lightgray}
\hline
%\arrayrulecolor{black}
\multicolumn{4}{|c|} {\bf MST}\\
\hline
Section~\ref{sec: MST_construction}&$ \lambda $& $O(n \log n+ \Delta \log^2 n)$& $O(\log n)$\\
%&&&\\
Kshemkalyani {\it et al.}~\cite{KKMS24}&$-$& $O(m + n \log n)$& $O(\text{max}(\Delta, \log n)\log n)$\\
%&&&\\
%\arrayrulecolor{lightgray}
\hline
%\arrayrulecolor{black}
\multicolumn{4}{|c|} {\bf BFS}\\
\hline
Section~\ref{sec: BFS_improved}&$\lambda$&$O(\text{min} (m \log n, D \Delta)+ n \log n +\Delta \log^2 n)$& $O(\log n)$\\
Section~\ref{sec: BFS_improved}&root&$O(\text{min} (m \log n, D \Delta))$& $O(\log n)$\\
%&&&\\
Chand {\it et al.}~\cite{CKM24} & root & $O(D \Delta)$ & $O(\log n)$\\

\hline
\end{tabular}
\caption{Comparing previous and developed results for three problems in a graph $G$ in the agent-based model. `$-$' means no a priori knowledge of root and other parameters. $\lambda$ is the highest ID agent in an $n$-node graph with $m$ edges, diameter $D$, and maximum degree $\Delta$.
%\manish{I'm not sure where are we introducing $\lambda$ or we have introduced it earlier. Could you please check?}\prabhat{some reviewers argue that since $\lambda$ needs to be known, it must be reflected in the results table. that's why I changed it. $\lambda$ is introduced in the model. }
}\label{tbl: comparative_analysis}

\end{table}

% Trees are one of the most important sub-structures of a graph. Different sub-tree structures of a graph are useful for various different purposes and therefore has been extensively studied in both classical and distributed computing literature. 

The leader election problem was first stated by Le Lann \cite{Lann77} in the context of token ring networks, and since then it has been central to the theory of distributed computing. Awerbuch \cite{Awerbuch87} provided a deterministic algorithm with time complexity $O(n)$ and message complexity $O(m+n\log n)$. Peleg \cite{Peleg90L} provided a deterministic algorithm with optimal time complexity $O(D)$ and message complexity $O(mD)$. Recently, an algorithm is given in \cite{KPP0T15} with message complexity $O(m)$ but no bound on time complexity, and another algorithm with $O(D\log n)$ time complexity and $O(m\log n)$ message complexity. Additionally, it was shown in \cite{KPP0T15} that the message complexity lower bound is $\Omega(m)$ and time complexity lower bound is $\Omega(D)$ for deterministic leader election in graphs.  Leader election was not studied in the agent-based model before, except for \cite{KKMS24}. %See Table~\ref{tbl: comparative_analysis}. \anis{check this sudden "See Table~1". No matching with the previous sentences.}

One of the earliest studies of constructing a Minimum Spanning Tree (MST) in a distributed setting was made by Spira~\cite{s77}, focusing primarily on the communication complexity of the algorithm. This study laid the groundwork for the classical GHS algorithm~\cite{ghs83} for constructing MST in distributed networks. In the paper, the authors presented algorithms for constructing an MST in $O(n\log n)$ rounds using $O(n\log n + m)$ messages, where each message contained at most one weight value—a non-negative integer less than $\log n$, along with three additional bits. Later, the time complexity was reduced to $O(n)$ rounds in subsequent work~\cite{a87} and further to $O(\sqrt{n} \log^\star n + D)$ in~\cite{gkp98}. Additionally,~\cite{pr99} established a lower bound of $O(\frac{\sqrt{n}}{\log{n}} + D)$ rounds for the time complexity of distributed spanning tree construction. More recently, MST construction for an agent-based model was introduced in~\cite{KKMS24}. This study demonstrated an algorithm to build an MST with $n$ agents, achieving a time complexity of $O(m + n \log n)$ rounds. For agents beginning in a dispersed initial configuration, the algorithm requires $O(\max(\Delta, \log n)\log n)$ bits of memory per agent; however, if agents start from arbitrary positions, the memory requirement per agent increases to $O(n \log n)$ bits.
The earliest agent-based approach to BFS tree construction is found in the work of Awerbuch~\cite{awerbuch95, awerbuch99}, where a single agent explores the graph via a BFS traversal. To manage resource limitations, this agent periodically returns to its starting node, a technique known as \textit{Piecemeal Exploration} (for instance, to recharge). In more recent work, Palanisamy et al.~\cite{palanisamy20} proposed a BFS traversal method for a multi-agent distributed system by dividing the graph into clusters, assigning an agent to each cluster to construct parts of the BFS tree in parallel. BFS traversal has also been applied to solve the dispersion problem. For example, Kshemkalyani et al.~\cite{kshemkalyani22} used a BFS-based strategy to disperse agents across the graph, ensuring nearly one agent per node. Under the local communication model—where agents can only communicate when co-located—their BFS-based algorithm achieves a time complexity of $O(D\Delta(D + \Delta))$ rounds when all agents start from a single node. In the case where agents start from arbitrary nodes, their BFS-based dispersion requires $O((D + k)\Delta(D + \Delta))$ rounds, but this assumes a global communication model, allowing agents to communicate at any time, regardless of location. Both algorithms require each agent to have $O(\log D + \Delta \log k)$ bits of memory. Most recently, in~\cite{CKM24}, $n$ agents were employed to construct a BFS tree in an arbitrary, anonymous $n$-node graph. Starting with a root node, the agents first built a spanning tree to determine $\Delta$, which was then used to complete the BFS tree. This algorithm completes BFS construction in $O(D\Delta)$ rounds when the root node is initially known. Additional related work on distributed BFS construction is available in~\cite{CKM24}. In~\cite{manish_icdcit}, the authors addressed leader election and MST construction in a similar model, with the main focus on minimizing time and memory when the graph parameters $\Delta$ and $n$ are known to the agents a priori. They gave a deterministic leader election algorithm running in $O(n \log^2 n + D \Delta \log n)$ rounds with $O(\log n)$ bits per agent, and an MST construction in $O(m + n \log n)$ rounds using $O(\Delta \log n)$ bits per agent.

\section{Agent-Based MST Construction}\label{sec: MST_construction}

We construct the Breadth-First-Search (BFS) tree from an initial dispersed configuration of the $n$-agents on the graph $G$ having $n$-nodes. To construct a BFS tree, a root node is needed. To achieve this, we first elect a leader that serves as the root. The Minimum Spanning Tree (MST) procedure provides a leader election as a subroutine which becomes the root. Therefore, first, we construct an MST and elect the leader; in the process, we also gather some graph parameters that are required for the BFS tree construction. The MST is constructed by adapting the existing distributed MST algorithm developed by GHS~\cite{ghs83}. Although our graph $G$ is unweighted we adapt our approach from the GHS algorithm due to its parallel nature. Therefore, it also improves the existing agent-based MST result~\cite{KKMS24} which also uses the variant of the GHS algorithm but in a somewhat sequential way.

To construct the MST from our unweighted graph $G$, we first assign distinct edge weights to every edge (details in Section~\ref{sec: unweighted_spanning_tree}). For MST construction, a meeting of two adjacent agents is an essential condition. The challenging part is synchronizing the movement of the agents. As a solution, we develop an approach that requires $O(\log n)$ rounds to meet two adjacent agents across the same edge (in Section~\ref{sec: Procedure_MN}).
%To construct the BFS tree, we employ the following approach. First, we construct a spanning tree to gather essential graph parameters: $\Delta$, $m$, and $n$. This spanning tree is built using a modified version of the distributed GHS Minimum Spanning Tree (MST) algorithm. Although our graph $G$ is unweighted, we use the GHS algorithm for two key reasons. First, it enables a highly distributed construction by allowing different fragments of the tree to be formed in parallel, which are then merged simultaneously. This approach contrasts with the spanning tree algorithm in~\cite{CKM24}, which requires a designated root to initiate the process and runs in $O(D\Delta)$ rounds. Second, this approach provides an independent examination of MST construction for weighted graphs and offers improvements over previous agent-based MST methods.

\subsection{Meeting with Neighboring Agents}\label{sec: Procedure_MN}

 The agents operate autonomously without any centralized control, so, ensuring that an agent successfully meets its intended neighbor is challenging. This difficulty arises because the targeted neighbor might simultaneously move to meet another of its own neighbors in the same round. Therefore, a mechanism is needed to guarantee that two adjacent agents, intending to meet, can do so within a specified time frame. One such mechanism, known as \texttt{Protocol MYN}, has been employed in several studies~\cite{chand23,aamas_ea,CKM24}, where agents use the knowledge of $\Delta$ (the maximum degree) along with their ID bits to ensure that each agent meets all its neighbors at least once within $O(\Delta\log\lambda)$ rounds. However, this approach has two key drawbacks. First, agents must have prior knowledge of $\Delta$, the highest degree in the graph. Secondly, the order in which neighbors are met is dictated by the ID bits rather than the node’s port numbers. In some algorithms, agents may be required to meet only a specific neighbor or visit the neighbors in a sequence of port numbers. Our proposed approach overcomes both of these limitations. 

 Our proposed algorithm \texttt{Meeting with Neighbor()} (Algorithm~\ref{alg:meeting_protocol}) is as follows. Consider two agents, $r_u$ and $r_v$, initially located at adjacent nodes $u$ and $v$, respectively. Let $p_u$ and $p_v$ denote the ports connecting these nodes. Each agent $r_u$ first pads its ID $r_u.ID$ with leading $0$s to ensure a binary length of exactly $\log \lambda$ bits. Let us call this padded ID as $r_u.b$ (Line 1). It then computes a new identifier, $new\_ID$, by appending the bitwise complement of its ID to the left (Line 2), resulting in a $2\log \lambda$-bit string. The protocol runs for {$4\log \lambda$ rounds, with two rounds corresponding to a bit position (Lines 3-10). Agents scan their $new\_ID$ from the least significant bit (LSB) to the most significant bit (MSB). In each of these rounds, if the current bit of $r_u.new\_ID$ ($new\_ID$ of agent $r_u$) is $1$, $r_u$ moves to neighboring node $v$; otherwise, it stays at node $u$. Since the agents have distinct IDs, there must exist a bit position in $r_u.new\_ID$ and $r_v.new\_ID$ where the former has a 1 and the latter has a 0. This guarantees that $r_u$ moves to $v$ and encounters $r_v$.

For example, let $r_u$ have ID $100$ and $\lambda = 1100$. Padding gives $0100$, and computing $new\_ID$ results in $10110100$. If $r_v$ has ID $110$, then $r_v.new\_ID$ is $10010110$. In this case, $r_u$ meets $r_v$ in the $2 \cdot 6=12$th round. %\hl{We see that padding of an ID only changes the representation of the ID not the value of the ID, therefore, in the rest of the paper we use the word ID for the padded ID until and unless specified.}

\begin{algorithm}[t]
    \caption{Meeting with Neighbor()}
    \label{alg:meeting_protocol}
    \begin{algorithmic}[1]
        \Require Two agents $r_u$ and $r_v$ with unique IDs, located at adjacent nodes $u$ and $v$.
        \Ensure The agent $r_u$ meets $r_v$ at $v$ within $4\log \lambda$ rounds.
        \State Pad each agent's ID to $\log \lambda$ bits by adding leading zeros and call it $b$.
        \State $r_u.new\_ID\leftarrow (\sim r_u.b)||(r_u.b)$.
        \State Let us suppose $round\_no$ represents the round number initialized to $0$.
        \For{$ 4 \log \lambda$ rounds}
            \If{$round\_no\; mod\; 2 =0 $}
                \State $r_u$ scans $r_u.new\_ID$ from LSB to MSB.
                \If{$(round\_no/2 +1)^{th}$ bit of $r_u.new\_ID$ is $1$} 
                    \State $r_u$ moves to node $v$ via port $p_u$ and return to node $u$
                \Else 
                    \State $r_u$ remains at node $u$ for 2 consecutive rounds
                \EndIf
            \EndIf
            %\State $round\_no \leftarrow round\_no+1$
        \EndFor
    \end{algorithmic}
\end{algorithm}

\begin{observation}\label{lemma:meeting_protocol}
Let $r_u$ be an agent located at node $u$ with degree $\delta_u$. Let $v$ be a neighboring node of $u$ hosting an agent $r_v$. Then $r_u$ can meet $r_v$ within $4\log \lambda$ rounds. Furthermore, $r_u$ can meet all its neighbors in at most $\delta_u \cdot 4\log \lambda$ rounds.
\end{observation}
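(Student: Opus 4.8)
The plan is to establish the two claims of Observation~\ref{lemma:meeting_protocol} in sequence: first that a single neighbor $r_v$ is met within $4\log\lambda$ rounds, and then that all $\delta_u$ neighbors are met within $\delta_u \cdot 4\log\lambda$ rounds by running the single-neighbor procedure once per neighbor. The heart of the argument is the first claim, so I would focus attention there.

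First I would set up the notation carefully. Both agents $r_u$ and $r_v$ pad their IDs to exactly $\log\lambda$ bits, so $r_u.b$ and $r_v.b$ are distinct binary strings of equal length (distinctness of the unpadded IDs is preserved by padding to a common length). Then each forms $new\_ID = (\sim b)\,||\,(b)$, a string of $2\log\lambda$ bits consisting of the bitwise complement followed by the original. The key structural observation is that for \emph{any} two distinct equal-length strings, the concatenation of complement-then-original guarantees a position where one agent's $new\_ID$ has a $1$ while the other's has a $0$. I would prove this as the central lemma-within-the-proof: since $r_u.b \neq r_v.b$, there is some bit index $i$ where they differ. If $r_u.b$ has a $1$ and $r_v.b$ has a $0$ at position $i$, then in the lower (original) half of $new\_ID$ agent $r_u$ has a $1$ and $r_v$ has a $0$ at the corresponding position; conversely if $r_u.b$ has a $0$ and $r_v.b$ has a $1$ at position $i$, then in the upper (complemented) half $r_u$ has a $1$ and $r_v$ has a $0$. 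Either way there exists a position $j$ with $r_u.new\_ID[j]=1$ and $r_v.new\_ID[j]=0$.

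Next I would connect this positional guarantee to the actual movement. The protocol devotes two consecutive rounds to each bit position, scanning from LSB to MSB over $2\log\lambda$ positions, for a total of $4\log\lambda$ rounds. At the round-pair corresponding to position $j$, agent $r_u$ (having a $1$ there) moves from $u$ to $v$ via port $p_u$, while agent $r_v$ (having a $0$ there) stays put at $v$. Hence $r_u$ finds $r_v$ at node $v$ during that round-pair, and since $j$ is among the $2\log\lambda$ scanned positions, this meeting occurs within the $4\log\lambda$-round window. This establishes the first claim. The second claim is then immediate: $r_u$ invokes the single-neighbor procedure separately for each of its $\delta_u$ neighbors (each time using the port to that neighbor), and since each invocation costs at most $4\log\lambda$ rounds, meeting all neighbors costs at most $\delta_u \cdot 4\log\lambda$ rounds.

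The main obstacle I anticipate is making the positional existence argument fully rigorous while being careful about the two halves and the symmetry between the agents. The complement-then-original construction is what rescues the case where the differing bit goes the "wrong way" (i.e., $r_u$ has a $0$ where $r_v$ has a $1$): without the complemented copy, $r_u$ might never have a $1$ against $r_v$'s $0$ and so might wait passively forever. I would therefore state explicitly that the construction is deliberately asymmetric in a way that guarantees a favorable position \emph{for the mover} regardless of which direction the first disagreement goes, and I would double-check the off-by-one indexing between the bit position and the round number (the protocol uses $(round\_no/2 + 1)^{th}$ bit and acts on even rounds) so that the claimed $4\log\lambda$ bound is tight and the stated example ($r_u$ meeting $r_v$ in the $12$th round) is consistent. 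A minor secondary point is confirming that $r_v$'s own concurrent attempts to meet its neighbors do not interfere, but since the statement only asserts that $r_u$ reaches $v$ when $r_v$ is stationary there, and $r_v$ stays at $v$ precisely at the critical position, this is handled by the same positional guarantee.
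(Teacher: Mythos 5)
Your proposal is correct and follows essentially the same argument the paper gives (informally, in the text surrounding the observation and the subsequent remark): padding preserves distinctness, the complement-then-original concatenation guarantees a position where $r_u$ has a $1$ and $r_v$ has a $0$, so $r_u$ moves while $r_v$ stays, and the per-neighbor bound multiplies by $\delta_u$ for the second claim. Your explicit two-case analysis of the differing bit position is just a more rigorous write-up of the paper's one-line justification.
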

%\manish{I don't see we need to prove it formally the above Lemma neither we are doing. We may keep the ``Lemma 1" as ``Observation 1".}

    Observe that concatenating the IDs with their bitwise complement is crucial; otherwise, the two agents may not meet. For instance, consider agents $r_u$ and $r_v$ with IDs $0010$ and $0110$, respectively. Suppose $r_u$ aims to meet $r_v$, but $r_v$ is simultaneously engaged in another meeting with a neighbor $r_w \neq r_u$. In the second round, both $r_u$ and $r_v$ move, preventing their meeting. In the subsequent rounds, $r_u$ remains stationary, failing to meet $r_v$. However, using $new\_ID$, this issue is avoided. The new IDs for $r_u$ and $r_v$ are $11010010$ and $10010110$, respectively. In this case, $r_u$ is guaranteed to meet $r_v$ in the $7$th round, where $r_u$ has a $1$ and $r_v$ has a $0$.  
Having established the meeting with neighboring agents, we now move to the next section, where we construct a minimum spanning tree of a graph using agents. 
% We assume that the $ n $ agents are initially positioned with one agent per node. In this section, we construct two spanning trees. First, we generate a spanning tree for the unweighted graph, which will later be utilized in the BFS construction. Second, we construct a mini\pci{This Section can be checked.}mum spanning tree (MST) for the case when the graph is weighted (see Model~\ref{sec:model}).
 
\subsection{Constructing a Spanning Tree on an Unweighted Graph}\label{sec: unweighted_spanning_tree}
%We now present an algorithm for the agents to construct a spanning tree for a graph $G(V,E)$. Our construction is adapted from the GHS algorithm~\cite{ghs83}, a classical algorithm for constructing minimum spanning trees in the message-passing model of distributed computing. To adapt the GHS algorithm to our agent-based model, firstly, we need to assign distinct weights to the $m$ edges to meet the GHS algorithm's requirement of unique edge weights since we consider the graph of unweighted edges. This goal can also be achieved by assigning random weights to the edges, however; since our algorithm is deterministic, we use a deterministic method for assigning the edge weights. To achieve this, we use the agents' unique IDs and the port numbering at each node. Secondly, the weight assignment approach would require additional memory for storing and managing these assigned weights. For an agent $r_u$, it would be $O(\delta_u \log \lambda)$ bits, where $\delta_u$ is the degree of the node $u$. To avoid this overhead, we do not store the weights of each edge in the agent's memory but calculate these weights on the fly when required. Now, to assign a differentiating edge weight to each edge, the agents employ the following method.

Now, we present an algorithm that enables agents to construct a spanning tree for a given graph $G(V,E)$. Our approach is an adaptation of the GHS algorithm~\cite{ghs83}, a classical algorithm for constructing a Minimum Spanning Tree (MST) in the message-passing model of distributed computing. To adapt the GHS algorithm to our agent-based model, we first address the requirement of unique edge weights, as the algorithm assumes distinct weights for all $m$ edges. Since we consider a graph with unweighted edges, we must assign unique weights to meet this requirement. While this could be achieved by assigning random weights to the edges, our algorithm is deterministic, necessitating a deterministic method for weight assignment. To this end, we leverage the agents' unique identifiers (IDs) and the port numbering at each node to systematically assign distinct edge weights.Furthermore, this weight assignment approach introduces additional memory requirements for storing and managing the assigned weights. For an agent $r_u$, the memory required would be $O(\delta_u \log \lambda)$ bits, where $\delta_u$ denotes the degree of node $u$ and $\lambda$ represents the maximum ID value of the agents. To eliminate this overhead, we do not store edge weights in the agents’ memory. Instead, these weights are computed dynamically when needed. Below we outline the method used by the agents to dynamically assign distinct weights to each edge in the graph as needed.

\subsubsection{Generating Distinct Edge Weights:}\label{weights} To obtain a distinct edge weight for the edges, an agent $r_u$ does the following: 
\begin{itemize}
    \item $r_u$ meets its neighbors one by one. 
    \item Once $r_u$ meets with its neighbor $r_v$, the weight of the edge $(r_u,r_v)$ is evaluated as $r.ID+\frac{1}{p_{uv}+2}$, where $r.ID=\min{\{r_u.ID,r_v.ID\}}$ and $p_{uv}$ is the port number at agent $r$ leading to the other agent.
    \item Both $r_u$ and $r_v$ assign the same weight $r.ID+\frac{1}{p_{uv}+2}$ to the edge $(r_u,r_v)$.
\end{itemize}
%For example, let the edge $e$ connect the agents $r_3$ and $r_7$ (with resp. IDs $3$ and $7$). Let the local port number connecting $e$ at $r_3$ and $r_7$ be $2$ and $29$ respectively. Now, when the agents $r_3$ and $r_7$ meet, they both assign the weight $3+\frac{1}{2+2}=3.25$ to the edge $e$. Through this process, the agents can calculate the edge weights of the edges adjacent to it, as required. We now show the weights assigned by the above method is unique.\manish{Explantion seems reasonable to me. I think we may skip the example.}
\begin{lemma}[Distinct Edge Weights]
    The edge weights assigned to each edge are distinct.
\end{lemma}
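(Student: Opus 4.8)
The plan is to show that the map $e \mapsto w(e)$ is injective by exploiting the fact that the weight cleanly separates into an \emph{integer part} carrying the minimum endpoint ID and a \emph{fractional part} carrying the port number. First I would observe that every port number satisfies $p_{uv} \ge 0$, so $p_{uv}+2 \ge 2$ and hence the fractional contribution $\frac{1}{p_{uv}+2}$ lies in the half-open interval $(0, \tfrac12]$. In particular it is strictly between $0$ and $1$, so for any edge $e$ the integer part $\lfloor w(e)\rfloor$ equals exactly $r.ID = \min\{r_u.ID, r_v.ID\}$, while the fractional part $\{ w(e) \}$ equals $\frac{1}{p_{uv}+2}$. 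This decomposition is the crux: it lets me recover both the minimum-ID endpoint and the relevant port from the single real number $w(e)$.

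Next I would take two distinct edges $e_1$ and $e_2$ and split into two cases according to their minimum-ID values. \textbf{Case 1:} the two edges have different minimum endpoint IDs, say $k_1 \ne k_2$. Since both fractional parts lie in $(0,\tfrac12]$ and $k_1, k_2$ are integers differing by at least one, the sets $(k_1, k_1+\tfrac12]$ and $(k_2, k_2+\tfrac12]$ are disjoint, so $w(e_1)\ne w(e_2)$. \textbf{Case 2:} the two edges share the same minimum ID $k$. Because agent IDs are unique, there is a single agent $r$ with $r.ID=k$, and for each of $e_1,e_2$ this agent must be the strictly smaller endpoint (the other endpoint necessarily has ID $>k$). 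Hence both $e_1$ and $e_2$ are incident to $r$, and their weights share the integer part $k$. The two edges being distinct but both incident to $r$ forces their port numbers at $r$ to differ, $p_1 \ne p_2$; since $x \mapsto \frac{1}{x+2}$ is strictly decreasing, hence injective, on the nonnegative integers, the fractional parts differ and so do the weights.

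Finally I would note that both endpoints assign the \emph{same} value to a shared edge (the weight depends only on the two IDs and the port at the smaller-ID endpoint), so the weight is well defined per edge and the two-case argument indeed covers all pairs of distinct edges.

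The main obstacle — really the only delicate point — is justifying that the integer/fractional split is unambiguous, i.e.\ that $\frac{1}{p+2}$ never reaches or exceeds $1$ and never equals $0$; this is precisely why the $+2$ (rather than $p$ or $p+1$) appears in the denominator, guaranteeing a strictly positive value bounded by $\tfrac12$ even when the port is $0$. Once that bound is secured, uniqueness of IDs settles any tie in the integer part and injectivity of $p \mapsto \frac{1}{p+2}$ settles any tie in the fractional part, so no further computation is needed.
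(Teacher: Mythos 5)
Your proof is correct and rests on the same two observations as the paper's own argument: the fractional contribution $\frac{1}{p+2}$ lies strictly in $(0,1)$, so edges with distinct minimum endpoint IDs get distinct weights, while a shared minimum ID forces both edges to be incident to the same (unique) agent and hence to use different ports there. The only difference is organizational---you split cases by whether the minimum IDs coincide, whereas the paper splits by whether the edges are adjacent---so this is essentially the same approach, presented slightly more cleanly.
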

\begin{proof}
    To prove this, we consider two arbitrary edges of $G$, say $e_1$ and $e_2$. First, we assume that $e_1$ and $e_2$ are adjacent edges having the common node $u$ containing the agent $r_u$. Let $e_1$ connect $r_u$ (at node $u$) to $r_v$ (at node $v$) and $e_2$ connect $r_u$ to $r_w$ (node $w$) via port number $p_{uv}$ and $p_{uw}$ respectively. Now, if $r_u.ID=\min\{r_u.ID,r_v.ID,r_w.ID\}$, then, the weights of $e_1$ and $e_2$ namely $w_{e_1}$ and $w_{e_2}$ are given by $r_u.ID+1/(p_{uv}+2)$ and  $r_u.ID+1/(p_{uw}+2)$ which are, clearly, distinct as $p_{uv}\neq p_{uw}$. If $r_u.ID$ is not the minimum then the weights $w_{e_1}$ and $w_{e_2}$ are assigned by two different agents from the set $\{r_u,r_v,r_w\}$. Since the IDs of these agents are integers and the addition fraction in the weight calculation due to port number i.e., $1/(port+2)$ is less than 1, so $w_{e_1}$ and $w_{e_2}$ must be distinct. On the other hand, if $e_1$ and $e_2$ are not adjacent, then $w_{e_1}$ and $w_{e_2}$ are assigned weights based on different agents' IDs, which similarly, must be distinct.
\end{proof}

%Now, to calculate the weight of an edge, the agents need to meet its neighbor once, which it can do through the \texttt{Meeting with Neighbor()} protocol. In this way, each agent can meet all its neighbors and compute the outgoing edge weights. With distinct edge weights established, the agents can now simulate the GHS algorithm to construct a spanning tree. We now describe the execution of the algorithm.

To compute the weight of an edge, an agent must interact with its neighboring agent at least once, which can be achieved using the \texttt{Meeting with Neighbor()} protocol. Through this process, each agent sequentially meets all its neighbors and determines the weights of its outgoing edges. Once distinct edge weights have been assigned, the agents proceed to simulate the GHS algorithm to construct a spanning tree. Now, we provide the details of the MST construction (Algorithm~\ref{alg: MST}). We divide the algorithm into the following three main stages:

\begin{enumerate}
    \item \textbf{Minimum Weight Outgoing Edge (MWOE) Calculation:}  
    Each fragment's leader determines the minimum weight outgoing edge (MWOE) among all its outgoing edges using Algorithm~\ref{alg: MWOE Calculation by Leader}.
    \item \textbf{MWOE Identification and Selection:}  
    Using Algorithm~\ref{alg: MWOE Selection}, an agent gets to know whether its outgoing edge is MWOE. For this, the fragment's leader broadcasts the minimum weight value to all the agents in its fragment. Once an agent realizes that one of its incident edges is the MWOE, it proceeds to the next stage.
    
    \item \textbf{Fragment Merging:}  
    In this stage, as the agent holding the MWOE (say, $r_u$) meets with its adjacent agent (say, $r_v$), $r_u$ shares its fragment level ($treelevel$) and its intention to merge with or absorb the fragment of $r_v$. At this point, agent $r_u$ executes either \texttt{Merge()} (Algorithm~\ref{merge}) or \texttt{Absorb()} (Algorithm~\ref{absorb}), depending on the respective $treelevel$ values of agents $r_u$ and $r_v$ and whether the edge $(r_u,r_v)$ is the MWOE for both fragments.
\end{enumerate}

The process continues until all the agents become part of a single fragment with no more outgoing edges. At this point, the algorithm terminates. Below, we explain the three stages in detail. 
\subsubsection{Stage $1$ - Minimum Weight Outgoing Edge (MWOE) Calculation:}

Each edge is assigned a unique virtual weight, as derived from Section~\ref{weights}. Initially, each of the $n$ agents individually represents a distinct tree fragment. As the algorithm progresses, these fragments merge via the Minimum Weight Outgoing Edge (MWOE). In Algorithm~\ref{alg: MWOE Calculation by Leader}, each fragment's leader\footnote{During MST construction, the leader is assumed to be the fragment leader unless specified otherwise. Once all fragments merge into a single one, the fragment's leader becomes the leader of all agents.} computes the MWOE for its fragment. The algorithm begins with each agent exploring its neighbors in parallel. For each neighbor, the agent checks whether the edge connecting to it is internal or outgoing by comparing their respective $treelabel$ values. If the $treelabel$ values match, the edge is internal; otherwise, it is considered outgoing (Algorithm~\ref{alg: MWOE Calculation by Leader}, Line 3). Then each agent sequentially examines its neighbors to identify the minimum-weight outgoing edge. During this process, it maintains a variable $r_u.\text{min\_weight}$ to store the smallest weight encountered among all outgoing edges. Although each agent searches its local neighborhood sequentially, these searches occur in parallel across the agents. The agent then aggregates the minimum weights received from its children, storing the local minimum in the variable $min\_{local}$ (Lines 2–11, Algorithm~\ref{alg: MWOE Calculation by Leader}). This value is subsequently propagated up in the fragment’s hierarchy with the parent node, ensuring that the fragment leader obtains the global minimum (Algorithm~\ref{alg: MWOE Calculation by Leader}, Lines 12-15). The algorithm terminates when only a single fragment remains, which occurs when there are no outgoing edges. This termination condition is detected in Algorithm~\ref{alg: MWOE Calculation by Leader}, Line 16.  Each agent is also equipped with the variable $treelevel$, which tracks the level of a given fragment. Specifically, when two fragments with the same $treelevel$ merge, the newly formed fragment's $treelevel$ is incremented. The variables $treelabel$ and $treelevel$ serve distinct roles: $treelabel$ identifies agents within the same fragment (which serves like a tree ID), while $treelevel$ facilitates the merging of different fragments. Initially, $treelevel = 0$, and $treelabel$ is set to the fragment leader's ID. Since each agent starts as an independent fragment, its $treelabel$ is initially assigned as its own ID. As the fragments merge, the $treelabel$ of the agents in the merged fragment is determined by the leader of the new fragment. 

\begin{algorithm}[htbp]
\caption{MWOE Calculation by Fragment Leader}\label{alg: MWOE Calculation by Leader}
\begin{algorithmic}[1]
% \If{$r_u$ receives a visit from its parent (when exists) to initiate MWOE computation or $r_u$ is a leader with no children}
    % \State $r_u.\text{min\_weight} \leftarrow \phi$
    % %\State \hl{$r_u$ visits each neighbor $r$ one by one and performs the following.}
    % \For{each $r_u$ (in parallel) visits each neighbor $r$}
    %     \If{$r.\text{treelabel} \neq r_u.\text{treelabel}$} \Comment{Outgoing edge check.}
    %         \State $r_u$ computes the weight $w_{r_u,r}$ of edge $(r_u, r)$.
    %     \Else
    %         \State $w_{r_u,r} \leftarrow \perp$. 
    %         % \State $r_u$ visits $r$ to and instructs to explore its children for MWOE discovery.
    %     \EndIf
        
    %     \If{$r_u.\text{min\_weight} > w_{r_u,r}$}
    %         \State $r_u.\text{min\_weight} \leftarrow w_{r_u,r}$
    %     \EndIf
    % \EndFor
% \EndIf

\State $r_u.\text{min\_weight} \leftarrow \phi$
\For{each agent $r_u$ (in parallel), and for each neighbor $r$ of $r_u$}
    \If{$r.\text{treelabel} \neq r_u.\text{treelabel}$} \Comment{Outgoing edge check}
        \State $r_u$ computes the weight $w_{r_u,r}$ of edge $(r_u, r)$
    \Else
        \State $w_{r_u,r} \leftarrow \perp$
    \EndIf
    \If{$r_u.\text{min\_weight} > w_{r_u,r}$}
        \State $r_u.\text{min\_weight} \leftarrow w_{r_u,r}$
    \EndIf
\EndFor

\State $r_u$ waits until it receives $r_v.\text{min\_weight}$ from each child $r_v$ (if any)
\State $r_u$ calculates $r_u.\text{min\_local} = \min \left( \{ r_u.\text{min\_weight} \}, \{ r_v.\text{min\_weight} \mid r_v \text{ is a child of } r_u \} \right)$ and sends it to its parent
\If{$r_u$ is the leader agent}
    \State $r_u.\text{leader\_min} \leftarrow r_u.\text{min\_local}$

\If{$r_u.\text{leader\_min} = \perp$ }
    \State \textbf{terminate algorithm.}
\EndIf
\EndIf
\end{algorithmic}
\end{algorithm}

Before proceeding to the selection of MWOE, we highlight two important remarks. The first deals with memory complexity, which enables the agents to construct the MST using $O(\log \lambda)$ bits. The second introduces a simple synchronization technique that allows the leader of a fragment to efficiently communicate with all agents in its subtree.
\begin{remark}[Memory Complexity]\label{rem:memory}
    The primary memory overhead of the algorithm arises from maintaining child pointers. An agent with up to $\Delta$ children would typically require $O(\Delta \log \lambda)$ bits for storage. However, this complexity can be reduced using a \texttt{sibling} variable. When an agent discovers its first child, it sets its \texttt{child} pointer to the corresponding outgoing port. For each subsequent child, instead of storing multiple child pointers, the agent updates the previously added child's \texttt{sibling} variable with the new child's outgoing port. This cascading approach eliminates the need for $\Delta$ child pointers, allowing the \texttt{child} pointers to be stored efficiently in $O(\log \lambda)$ bits. The other variables \texttt{parent}, \texttt{child}, and \texttt{treelabel} require $O(\log \lambda) + O(\log \Delta) = O(\log \lambda)$ bits, while other variables require only $O(1)$ bits. Thus, the overall memory complexity is $O(\log \lambda)$ or $ O(\log n)$ bits, since $\lambda\leq n^{O(1)}$.
\end{remark}

% \begin{remark}[\texttt{parent-child} communication]
%     To efficiently exchange information between a parent and its children in a single round, we employ the following strategy: Throughout the algorithm, after every $4\log\lambda$ steps of neighbor interactions, an additional step is reserved where each agent (except the leader) moves to its parent to check for updates. Thus, after every $4\log\lambda$ rounds, each child oscillates once between its position and its parent.  Now, suppose the leader wants to broadcast information to all agents. In the first oscillation, its immediate children receive the value. In the next, the grandchildren obtain it, and this process continues. After $n$ such oscillations, all agents receive the value. However, there is still a small tweak required. If the agents continue to oscillate in periodic intervals, the agnets may never receive any updated values from the leader. So, we add a modification. Once, the agnets who are direct children of the leader gets updated value, they stop movement in the next period. This allows the agents in the next level (grandchildren of leader) to access this updated value. This allows the smooth trander of information from parent to children. Conversely, if children need to send information to their parent, they can move simultaneously during this reserved step. Thus, transmitting any value (e.g., computing a minimum) requires an additional $D$ rounds of communication. Here $D$ is the diameter of the constructed tree, which in the worst case, could be $O(n)$.
% \end{remark}

\begin{remark}[\texttt{parent-child} communication]\label{rem:oscilate}
To efficiently exchange information between a parent and its children in a single round, we introduce the following strategy: After every $4\log\lambda$ rounds of neighbor interactions, an additional round is reserved in which each agent (except the leader) moves to its parent for updates. Thus, after every $4\log\lambda$ round, each child oscillates once between its node and its parent node.  When the leader initiates a broadcast, its immediate children receive the value in the first oscillation, followed by the grandchildren in the next, continuing until all agents are updated after $D$ oscillations, where $D$ is the tree diameter. To ensure updates propagate correctly, once all children at a given level have received the update, they stay at their node for $4\log \lambda$ rounds, allowing deeper levels to update sequentially. This ensures a smooth transfer of information.  Conversely, during the same reserved round, children can send information to their parents simultaneously. Thus, transmitting any value requires an additional $O(n)$ rounds in the worst case.
\end{remark}

 % 

% \begin{algorithm}[t]
% \caption{MWOE Calculation by Fragment Leader}\label{alg: MWOE Calculation by Leader}
% \begin{algorithmic}[1]
% \If{\hl{$r_u$ receives a visit from its parent (when exists) to initiate MWOE computation or $r_u$ is a leader with no children}}
%     \State $r_u.\text{min\_weight} \leftarrow \phi$
%     \State \hl{$r_u$ visits each neighbor $r$ one by one and performs the following.}
%     \For{each neighbor $r$ of $r_u$}
%         \If{$r.\text{treelabel} \neq r_u.\text{treelabel}$} \Comment{Outgoing edge check}
%             \State $r_u$ computes the weight $w_{r_u,r}$ of edge $(r_u, r)$.
%         \Else
%             \State $w_{r_u,r} \leftarrow \perp$. 
%             \State $r_u$ visits $r$ to and instructs to explore its children for MWOE discovery \manish{Is this step necessary?}\prabhat{Possibly. Particularly, when the tree grows bigger. }
%         \EndIf
        
%         \If{$r_u.\text{min\_weight} > w_{r_u,r}$}
%             \State $r_u.\text{min\_weight} \leftarrow w_{r_u,r}$
%         \EndIf
%     \EndFor
% \EndIf
% \State $r_u$ waits until it receives $r_v.\text{min\_weight}$ from each child $r_v$ (if any)
% \State $r_u$ calculates $r_u.\text{min\_local} = \min \left( \{ r_u.\text{min\_weight} \}, \{ r_v.\text{min\_weight} \mid r_v \text{ is a child of } r_u \} \right)$ and sends it to its parent
% \If{$r_u$ is the leader agent}
%     \State $r_u.\text{leader\_min} \leftarrow r_u.\text{min\_local}$

% \If{$r_u.\text{leader\_min} = \perp$ }
%     \State \textbf{terminate algorithm.}
% \EndIf
% \EndIf
% \end{algorithmic}
% \end{algorithm}

\subsubsection{Stage $2$ - MWOE Identification and Selection:} To determine the MWOE of a fragment, an agent must verify whether any of its adjacent edges has been selected by the leader as the MWOE for connecting to another fragment. This requires the leader to broadcast the minimum computed weight value to all agents, a process carried out using Algorithm~\ref{alg: MWOE Selection}. Each agent checks if any of its adjacent edges is the MWOE by comparing its variable $r_u.min\_weight$ with the value sent by the leader; if they match, one of its adjacent edges is identified as the MWOE (Line 6). Since agents do not store edge weights in memory, they must reevaluate the weights (Line 7) to identify the edge corresponding to the MWOE weight received from the leader, introducing an additional computational factor of $\Delta \log n$ into Algorithm~\ref{alg: MWOE Calculation by Leader}. With the MWOE of the fragment determined, we now proceed to describe the merging of fragments.

\begin{algorithm}[htbp]
\caption{MWOE Identification and Selection}\label{alg: MWOE Selection}
\begin{algorithmic}[1]
    \State $r_u$ checks $r_t.\text{leader\_min}$.\Comment{Communication within the constructed tree is achieved using techniques established in Remark~\ref{rem:oscilate}}
    \If{$r_u.\text{leader\_min} \neq r_t.\text{leader\_min}$} \Comment{Fragment leader has calculated a new weight}
    \State $r_u.\text{leader\_min} \leftarrow r_t.\text{leader\_min}$
    \If{$r_u.\text{min\_weight} = r_u.\text{leader\_min}$} \Comment{One of $r_u$'s adjacent edges is the MWOE}
        \State $r_u$ visits each neighbor to re-calculate edge weights and find the edge matching $r_u.\text{leader\_min}$
        \State The matched edge is computed as the MWOE, connecting $r_u$ to some neighbor $r_k$ (say)
        \State $ r_u $ informs $ r_k $ that its incident edge is the MWOE and schedules it for merging.

    \EndIf
\Else \Comment{No new edge weight calculated}
    \State $r_u$ does nothing
\EndIf
\end{algorithmic}
\end{algorithm}

% \begin{algorithm}[t]
% \caption{MWOE Identification and Selection}\label{alg: MWOE Selection}
% \begin{algorithmic}[1]
%     \State $r_u$ checks $r_t.\text{leader\_min}$.\Comment{Communication within the constructed tree is achieved using techniques established in Remark~\ref{rem:oscilate}}
%     \If{$r_u.\text{leader\_min} \neq r_t.\text{leader\_min}$} \Comment{Fragment leader has calculated a new weight}
%     \State $r_u.\text{leader\_min} \leftarrow r_t.\text{leader\_min}$
%     \If{$r_u.\text{min\_weight} = r_u.\text{leader\_min}$} \Comment{One of $r_u$'s adjacent edges is the MWOE}
%         \State $r_u$ visits each neighbor to re-calculate edge weights and find the edge matching $r_u.\text{leader\_min}$
%         \State The matched edge is computed as the MWOE, connecting $r_u$ to some neighbor $r_k$ (say)
%         \State $ r_u $ informs $ r_k $ that its incident edge is the MWOE and schedules it for merging.

%     \EndIf
% \Else \Comment{No new edge weight calculated}
%     \State $r_u$ does nothing
% \EndIf
% \end{algorithmic}
% \end{algorithm}

\subsubsection{Stage $3$ - Fragment Merging:}
    Let $F_1$ and $F_2$ be two fragments connected via a MWOE $e$. Then the fragments $F_1$ and $F_2$ merge based on the following rule:
    \begin{enumerate}
        \item \textbf{Rule 1} - If $F_1.treelevel<F_2.treelevel$, then $F_1$ joins the fragment $F_2$ and the $treelabel$ of the new fragment will be $F_2.treelabel$. Moreover, the leader of the fragment will be the leader of fragment $F_2$.
        \item \textbf{Rule 2} - If $F_1.treelevel=F_2.treelevel$, then $F_1$ and $F_2$ will merge if and only if, $e$ is the common MWOE of $F_1$ and $F_2$. In such a case, the leader of the fragment will be the leader of $F_1$ and $F_2$ that has the minimum ID.
        \item \textbf{Rule 3} - In other cases, fragments wait till one of the rules 1 or 2 applies to them. 
    \end{enumerate}
     Now, let us consider two adjacent agents $r_u$ and $r_k$ connected via the MWOE $e$ of $r_u$. Now, $r_u$ does the following
     \begin{enumerate}
         \item If $r_u.treelevel>r_k.treelevel$, $r_u$ does nothing.
         \item If $r_u.treelevel=r_k.treelevel$ and $e$ is the MWOE of the fragment belonging to $r_k$ as well, $r_u$ executes \texttt{Merge()}.
         \item If $r_u.treelevel=r_k.treelevel$ and $e$ is \textbf{not} the MWOE of the fragment belonging to $r_k$, $r_u$ does nothing.
         \item If $r_u.treelevel<r_k.treelevel$, $r_u$ executes \texttt{Absorb()}.
     \end{enumerate}
 Specifically, \texttt{Merge()} is used when two fragments have the same level, requiring an increment in their $treelevel$ value. In contrast, \texttt{Absorb()} is typically applied when two fragments can be ordered based on their $treelevel$ and $ID$ values. Through \texttt{Absorb()}, the fragment of $r_u$ merges into the fragment of $r_k$. During the merging process, the merging fragment terminates all ongoing operations and updates its child and parent pointers, as well as $treelabel$ and $treelevel$ accordingly. 

\begin{algorithm}[H]
\caption{Absorb()}\label{absorb}
\begin{algorithmic}[1]
          \State $r_u$ immediately aborts any operation it may be doing.
          \State $r_u$ marks $r_k$ as parent and collects $r_k.treelabel,r_k.treelevel$ and leader information of $r_k$'s fragment. 
          \State $r_u$ moves to its parent agent, and instructs to immediately abort any ongoing operation.
          \State $r_u$ informs its parent about the new leader, changes its parent's $treelabel,treelevel$ to $r_k.treelabel,r_k.treelevel$, and changes the parent pointer to child pointer.
          \State Once the parent pointer has been changed, the old parent moves to its own parent, requests abortion of any ongoing operation, marks its parent pointer as a child pointer, and similarly transmits and modifies the new information. Continuing similarly, when this message reaches the root, the root sends out completion information to $r_k$ via the newly changed parent chain. 
          \State Once $r_u$, receives the completion information from its old root, it sends this new leader and $treelabel,treelevel$ information across its other (old) children in a similar fashion. The leaf children, after modifying their own information, again send out completion information to $r_u$. After $r_u$ has received the completion information from all of its children, it now informs $r_k$ about the completion of the \texttt{ Absorb()}. As $r_u$ visits $r_k$ with the completion information, $r_k$ sets the port leading to $r_u$ as its child pointer. 
          \State The fragment of $r_u$ has now merged with $r_k$'s fragment with a new leader of $r_k$'s fragment.
\end{algorithmic}
\end{algorithm}

\vspace{-0.75cm}

\begin{algorithm}[H]
\caption{Merge()}\label{merge}
\begin{algorithmic}[1]
\State $r_u$ and $r_k$ compare their leader IDs
\If{$r_u$ has a higher leader ID}
    \State $r_k$ increments $r_k.treelevel$ by $1$.
    \State $r_k$ executes \texttt{Absorb()}
\Else
    \State $r_u$ increments $r_k.treelevel$ by $1$.
    \State $r_u$ executes \texttt{ Absorb()}
\EndIf
\end{algorithmic}
\end{algorithm}

\begin{algorithm}[H]
\caption{MST Construction Using Mobile Agents}\label{alg: MST}
\begin{algorithmic}[1]
  \While{$r_u.leader\_min\neq \perp$}
    \State Calculate the MWOE using Algorithm~\ref{alg: MWOE Calculation by Leader}.
    \State Identify and select the MWOE using Algorithm~\ref{alg: MWOE Selection}.\Comment{Let $e$ be the MWOE from the agent $r_u$ which connects to the agent $r_k$.}
        \If{$r_u.treelevel=r_k.treelevel$ and $r_k$ also scheduled for merging via $e$} 
        \State Execute \texttt{Merge()}.
        \ElsIf{$r_u.treelevel=r_k.treelevel$ and $r_k$ not scheduled for merging via $e$}
        \State $r_u$ does nothing and waits. 
        \ElsIf{$r_u.treelevel>r_k.treelevel$}
        \State $r_u$ does nothing and waits.
        \ElsIf{$r_u.treelevel<r_k.treelevel$}
        \State $r_u$ executes \texttt{Absorb()}. 
        \EndIf
  \EndWhile
\end{algorithmic}
\end{algorithm}

\subsubsection{Complexity Analysis of Algorithm~\ref{alg: MST} (MST Construction)}
\begin{lemma}
    The respective time complexities of the 3 stages of the MST construction are respectively $O(\Delta\log\lambda+n)$, $(\Delta\log\lambda+n)$ and $O(n)$ rounds.
\end{lemma}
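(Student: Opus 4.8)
The plan is to treat the three stages separately, in each case splitting the running time into a \emph{local} component (an agent meeting its incident neighbors) and a \emph{global} component (information travelling along the current fragment tree), and then to invoke Observation~\ref{lemma:meeting_protocol} for the local part and Remark~\ref{rem:oscilate} for the global part. For Stage~1, I would first bound the local cost: each agent $r_u$ must inspect every incident edge to decide whether it is outgoing and, if so, compute its weight and update $r_u.\text{min\_weight}$. By Observation~\ref{lemma:meeting_protocol}, $r_u$ meets all $\delta_u \le \Delta$ of its neighbors within $\delta_u \cdot 4\log\lambda = O(\Delta\log\lambda)$ rounds, and since these scans run in parallel across all agents this bounds the entire local phase. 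The global cost is the bottom-up aggregation of $\text{min\_local}$ from the leaves to the fragment leader (Lines~12--15 of Algorithm~\ref{alg: MWOE Calculation by Leader}); by the \texttt{parent-child} oscillation mechanism of Remark~\ref{rem:oscilate}, propagating a value along the whole fragment tree costs $O(n)$ rounds in the worst case. Adding the two contributions yields $O(\Delta\log\lambda + n)$.

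Stage~2 is essentially symmetric to Stage~1. The leader broadcasts $\text{leader\_min}$ downward through the fragment tree, which again costs $O(n)$ rounds by Remark~\ref{rem:oscilate}. Then every agent whose stored $\text{min\_weight}$ equals the broadcast value must recompute its incident edge weights in order to locate the actual MWOE edge, since weights are never stored (Remark~\ref{rem:memory}); this is one more pass of meeting all neighbors, costing $O(\Delta\log\lambda)$ by Observation~\ref{lemma:meeting_protocol}. The total is again $O(\Delta\log\lambda + n)$.

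For Stage~3 the cost is purely global. Both \texttt{Merge()} and \texttt{Absorb()} (and \texttt{Merge()} itself invokes \texttt{Absorb()}) move the new $\text{treelabel}$, $\text{treelevel}$, and leader information up the merging fragment's parent chain to its root, reverse the parent pointers along that chain, and then push the update down into the remaining subtrees, with completion acknowledgements travelling back. Each of these is a constant number of traversals of a path, or of a fragment tree, on at most $n$ nodes, so by Remark~\ref{rem:oscilate} each \texttt{Absorb()}/\texttt{Merge()} completes in $O(n)$ rounds, giving the claimed $O(n)$ bound for the stage.

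The main obstacle I anticipate is making the $O(n)$ global bounds genuinely rigorous rather than a bare appeal to Remark~\ref{rem:oscilate}. I must verify that interleaving the reserved oscillation rounds with the $4\log\lambda$-round neighbor windows does not inflate a depth-$D$ propagation to $\Theta(D\log\lambda)$, so that the global component stays at $O(n)$ and not $O(n\log\lambda)$; this is what keeps Stages~1 and~2 at $O(\Delta\log\lambda + n)$ rather than $O(\Delta\log\lambda + n\log\lambda)$. For Stage~3 I must additionally argue that the parent-pointer reversal in \texttt{Absorb()}, together with possibly several concurrent absorptions within a single phase, still touches each tree edge only a constant number of times, so that the work does not compound across simultaneous merges and the total remains $O(n)$.
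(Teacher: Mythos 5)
Your proposal follows essentially the same decomposition as the paper's proof: each stage is split into a local neighbor-scanning cost of $O(\Delta\log\lambda)$ (via the meeting protocol) plus a global tree-propagation cost of $O(n)$ (via the parent--child oscillation of Remark~\ref{rem:oscilate}), and Stage~3 is bounded by the parent-chain traversal in \texttt{Absorb()}. The concerns you flag at the end (whether interleaving the reserved rounds inflates the $O(n)$ term, and whether concurrent absorptions compound) are legitimate, but the paper's own proof does not address them either and simply asserts the $O(n)$ bounds, so your argument is at least as complete as the original.
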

\begin{proof}
    For \textit{Stage 1}, each agent takes $O(\Delta\log\lambda)$ rounds in the worst case to search its neighbors, which is done in parallel by the agents of a fragment. To report the minimum weight to the leader, the agents must send this value via their parent, and the leader computes the minimum of all the received weights. Since this transmission occurs via the tree, therefore in the worst case it can take $O(n)$ time. So, calculating the MWOE by the leader takes $O(\Delta\log\lambda+n)$ rounds. To distribute the MWOE to the whole fragment, the agents can take a maximum of $O(n)$ rounds as the agents that are at the same distance from the root parallelly collect this MWOE from their parents. Now the agents once again evaluate the edge weights to compute the matching edge with the selected edge weight. So, \textit{Stage 2} also takes $O(n+\Delta\log\lambda)$ rounds. Finally, for \textit{Stage 3}, the agent that connects its fragment via the MWOE needs to visit the chain of parents to inform them about the merging and changing of their child-parent pointers, respectively. This modification takes $O(n)$ rounds in the worst case.
\end{proof}

According to the GHS algorithm, the number of fragments while constructing the spanning tree reduces by at least half in each phase, so it takes $O(\log n)$ phases for the fragments to merge into a single fragment, giving out the final spanning tree. Now since, during each phase, weight is evaluated (at most twice), Algorithm~\ref{alg: MWOE Selection} and Algorithm~\ref{alg: MWOE Calculation by Leader} is invoked once followed by merging of fragments, the time complexity of constructing the spanning tree is given by $O(\log n)\cdot(O(\Delta\log \lambda)+O(\Delta\log \lambda+n)+O(n+\Delta\log \lambda)+O(n))=O(\Delta\log n \log \lambda+n\log n)$ rounds. Considering that $O(\log\lambda)=O(\log n)$, we can rewrite the complexity as $O(\Delta\log^2 n+n\log n)$. The maximum memory overhead for the agents is used to store the IDs and pointers. Since the IDs are bound by $n^c$ for some constant $c\geq1$, the memory requirement for each agent is $O(\log n)$ bits. Combining other details from the discussion in Remark~\ref{rem:memory}, the correctness of the original GHS algorithm, and the preceding lemma, we have the following theorem.
\begin{theorem}[MST]\label{theorem: MST}
    Given an arbitrary simple connected graph $G$ with $n$ nodes $m$ edges, maximum degree $\Delta$ and diameter $D$. Then, an MST can be constructed in  $O(\Delta\log^2 n+n\log n)$ rounds, when the $n$ agents begin in a \emph{dispersed} initial configuration with $O(\log n)$ bits of memory per agent. The agents do not require any prior knowledge about any graph parameters but need to know an upper bound of the ID of the agents, $\lambda$. 
\end{theorem}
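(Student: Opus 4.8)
The plan is to prove Theorem~\ref{theorem: MST} by assembling the pieces already developed in Section~\ref{sec: MST_construction}, establishing correctness first and then aggregating the per-stage complexities into the claimed bounds. Since the algorithm is an adaptation of the GHS algorithm, the correctness argument naturally splits into two parts: showing that our agent-based simulation faithfully implements the GHS primitives, and invoking the established correctness of GHS itself. First I would verify that the edge-weight assignment yields a well-defined unique MST: by the \emph{Distinct Edge Weights} lemma the weights are pairwise distinct, and a graph with distinct edge weights has a unique MST, so the target tree is well-defined and independent of execution order. Next I would argue that each of the three stages correctly realizes its GHS counterpart—MWOE computation (Algorithm~\ref{alg: MWOE Calculation by Leader}), MWOE selection (Algorithm~\ref{alg: MWOE Selection}), and fragment merging via \texttt{Merge()}/\texttt{Absorb()} under Rules~1--3—so that the invariant ``each fragment is a connected subtree of the unique MST with a designated leader and consistent \texttt{treelabel}/\texttt{treelevel}'' is preserved across phases.

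For the complexity bound, I would lean directly on the preceding lemma giving the per-stage round costs $O(\Delta\log\lambda + n)$, $O(\Delta\log\lambda + n)$, and $O(n)$. The key structural fact imported from GHS is that the number of fragments at least halves each phase, so the algorithm terminates after $O(\log n)$ phases. Within a single phase each stage runs once (with edge weights recomputed at most twice, as noted, since agents do not store weights), so one phase costs $O(\Delta\log\lambda + n)$ rounds. Multiplying by the $O(\log n)$ phases gives
\begin{equation*}
O(\log n)\cdot\bigl(O(\Delta\log\lambda) + O(n)\bigr) = O(\Delta\log n\log\lambda + n\log n).
\end{equation*}
Using the standing assumption $\lambda \le n^c$, hence $\log\lambda = O(\log n)$, this simplifies to the claimed $O(\Delta\log^2 n + n\log n)$ rounds. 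For memory, I would invoke Remark~\ref{rem:memory}: the \texttt{sibling}-pointer trick reduces child-pointer storage from $O(\Delta\log\lambda)$ to $O(\log\lambda)$ bits, and all remaining variables (\texttt{parent}, \texttt{child}, \texttt{treelabel}, \texttt{treelevel}, \texttt{min\_weight}) also fit in $O(\log\lambda) = O(\log n)$ bits.

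The main obstacle I anticipate is \emph{not} the arithmetic but justifying that the agent-based simulation genuinely preserves the GHS invariants under the asynchronous-looking local movements—in particular, that the parent-child oscillation mechanism of Remark~\ref{rem:oscilate} correctly serializes broadcast and convergecast so that a leader's MWOE value reaches every fragment member and local minima propagate back without races or lost updates. A subtle point is that fragments at different \texttt{treelevel} values may be mid-merge while neighbors probe them; I would need to confirm that the \texttt{Merge()}/\texttt{Absorb()} abort-and-restart protocol leaves \texttt{treelabel} and pointer structure in a consistent state before the next MWOE computation begins, so that the outgoing-edge test (comparing \texttt{treelabel} values in Algorithm~\ref{alg: MWOE Calculation by Leader}, Line~3) never misclassifies an internal edge as outgoing or vice versa. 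Once this simulation-faithfulness is established, correctness of the output reduces to the correctness of the original GHS algorithm, and the theorem follows by combining the stage lemma, the $O(\log n)$-phase bound, and Remark~\ref{rem:memory}.
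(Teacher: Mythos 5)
Your proposal follows essentially the same route as the paper: it combines the per-stage complexity lemma with the GHS fragment-halving argument to get $O(\log n)$ phases, multiplies to obtain $O(\Delta\log n\log\lambda + n\log n)$, applies $\log\lambda = O(\log n)$, and cites Remark~\ref{rem:memory} for the $O(\log n)$-bit memory bound and the correctness of the original GHS algorithm for the output. The simulation-faithfulness concerns you raise are legitimate but are likewise left at the level of citation to GHS in the paper itself, so there is no substantive divergence.
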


\subsection{From the Unweighted-MST to Leader Election and Weighted-MST}

    Following the construction of the spanning tree, all fragments successfully merge into a single component, resulting in a unique leader election that is part of a single fragment having $n$ agents. Consequently, we establish the following \textbf{Theorem~\ref{theorem: Leader}}, which guarantees the existence of a single leader within the graph.

    \begin{theorem}[Leader Election]\label{theorem: Leader}
     Given a dispersed configuration of $n$ agents with unique identifiers positioned one agent at every node of $n$ nodes $m$ edges, maximum degree $\Delta$ graph $G$ with no node identifier. Then, there is a deterministic algorithm that elects a leader in $O(\Delta \log^2 n+n\log n)$ rounds and $O(\log n)$ bits per agent. The algorithm requires no prior knowledge of any graph parameters and assumes that each agent knows only $\lambda$, the maximum among all agent identifiers.
\end{theorem}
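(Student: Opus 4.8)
The plan is to obtain Theorem~\ref{theorem: Leader} as a direct corollary of the spanning-tree construction underlying Theorem~\ref{theorem: MST}. The central observation is that the GHS-style fragment-merging process of Algorithm~\ref{alg: MST} maintains, as an invariant, a partition of the $n$ agents into disjoint fragments, each carrying a single distinguished fragment leader (recorded through the $treelabel$ field, which is set to that leader's ID). I would therefore first run the MST construction verbatim and argue that, upon termination, exactly one fragment remains; its unique fragment leader $r^\star$ is then declared the elected leader, with no additional rounds or memory required.

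The key step is to establish and preserve the single-leader-per-fragment invariant across every merge operation. Initially each agent is its own fragment and its own leader, so the invariant holds. I would then check that it is preserved by the two merging primitives: in \texttt{Absorb()} (Algorithm~\ref{absorb}) the absorbed fragment adopts the $treelabel$, $treelevel$, and leader of the fragment it joins, so the union of the two fragments retains exactly one leader; in \texttt{Merge()} (Algorithm~\ref{merge}) two equal-level fragments are combined and the leader is deterministically chosen as the one with the smaller ID before reducing to an \texttt{Absorb()}, again yielding a single leader. Because agent IDs are unique and the edge weights assigned in Section~\ref{weights} are distinct (Distinct Edge Weights lemma), every tie in the merge rules is broken deterministically, so no two agents can ever simultaneously claim leadership of the same fragment.

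Next I would invoke the correctness of the GHS merging strategy to argue termination with a single fragment: since the number of fragments drops by at least a factor of two per phase, after $O(\log n)$ phases all agents belong to one fragment, and the termination condition $r_u.\text{leader\_min} = \perp$ in Algorithm~\ref{alg: MWOE Calculation by Leader} (no outgoing edge exists) is detected precisely at this moment. By the invariant, this final fragment has exactly one leader $r^\star$, which is the required unique leader among all $n$ agents. The time and memory bounds are inherited unchanged from Theorem~\ref{theorem: MST}: the construction runs in $O(\Delta\log^2 n + n\log n)$ rounds (using \texttt{Meeting with Neighbor()} for neighbor interactions and the oscillation scheme of Remark~\ref{rem:oscilate} for intra-fragment communication), and each agent uses $O(\log n)$ bits by Remark~\ref{rem:memory}; no graph parameters are used beyond the knowledge of $\lambda$, matching the hypothesis.

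I expect the main obstacle to be the invariant-preservation argument rather than the complexity accounting: one must verify that concurrent merges across the graph---several pairs of fragments merging in the same phase---cannot create a fragment with two leaders or leave an agent with a stale $treelabel$ pointing to an obsolete leader. I would handle this by arguing that each individual merge event is serialized through the MWOE endpoint agent and the parent-pointer re-orientation in Algorithm~\ref{absorb}, so that the new leader and $treelabel$ updates propagate consistently to every agent of the merged fragment before the next phase begins.
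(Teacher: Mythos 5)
Your proposal is correct and matches the paper's own argument: Theorem~\ref{theorem: Leader} is obtained exactly as you describe, by running the MST construction of Section~\ref{sec: MST_construction} and declaring the unique leader of the final single fragment to be the elected leader, with the round and memory bounds inherited directly from Theorem~\ref{theorem: MST}. Your additional attention to the single-leader-per-fragment invariant under concurrent merges is a sound (and slightly more careful) elaboration of what the paper asserts implicitly.
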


    In the case of a weighted graph with distinct edge weights, the algorithm proceeds analogously to \textbf{Algorithm~\ref{alg: MST}}, with the primary distinction being that edge weights are predefined and do not require explicit computation (as discussed in \textbf{Section~\ref{weights}}). Consequently, all analytical results, including time and memory complexity, remain consistent with those derived for \textbf{Algorithm~\ref{alg: MST}}. This leads to the formulation of \textbf{Theorem~\ref{theorem: MST}}.
    
    With this knowledge, we proceed to the construction of the BFS tree of the graph using agents (discussed in the next section). The BFS tree construction requires that all agents be aware of the values of  $n$, $m$, and $\Delta$ beforehand. Utilizing the spanning tree structure, agents compute fundamental graph parameters—number of nodes ($n$), number of edges ($m$), and maximum degree ($\Delta$)—through data aggregation and maximum value identification. This process follows a methodology similar to the Minimum Weight Outgoing Edge (MWOE) computation and dissemination, as described in Algorithm~\ref{alg: MWOE Calculation by Leader} (aggregation) and Algorithm~\ref{alg: MWOE Selection} (spreading). Each agent transmits its locally computed values to its parent, accumulating the required information at the MST leader within $O(n)$ rounds. The leader then consolidates these values and propagates them back across the spanning tree, ensuring all agents acquire a global view of the network parameters.

\section{Agent-Based BFS Tree Construction}\label{sec: BFS_improved}

The paper~\cite{CKM24} presents a BFS construction algorithm that requires $O(D\Delta)$ rounds when a root is specified. In this section, we introduce a new BFS construction algorithm that completes in $O(m \log n)$ rounds, given a known root node and the number of edges $m$. By combining these two algorithms, we achieve a round complexity of $O(\min(D\Delta, m \log n))$ for the BFS construction. The assumptions of a known root and $m$ can be removed using the leader election protocol in Section~\ref{sec: MST_construction}. In fact, the values of $m$, and any other necessary graph parameters can be computed by Algorithm~\ref{alg: MST}. The BFS algorithm heavily uses the {\texttt{Meeting with Neighbor()} (Algorithm~\ref{alg:meeting_protocol})} as a subroutine, which requires $\lambda$ to be known to the agents. Below, we discuss an overview of the algorithm.

\medskip
\noindent {\bf High-Level Idea:} The algorithm proceeds by propagating levels and marking tree edges in the BFS tree, similar to the flooding-based algorithm in the message-passing model. We use the {\texttt{Meeting with Neighbor()} (Algorithm~\ref{alg:meeting_protocol})} to send level information from one agent to its neighboring agents, one by one. Suppose an agent $r^\star$ is located at the specified root node, which is at level $0$ in the BFS tree. We denote an agent's level as the level of its corresponding node in the BFS tree. Thus, the root agent $r^\star$ sets its level as $r^\star.level = 0$, and every other agent $r_u$ sets its level as $r_u.level = \perp$ initially. Whenever an agent $r_u$ updates its level, it informs its neighboring agents of its new level. A neighboring agent $r_v$ updates its level such that $r_v.level = r_u.level+1$ if, $r_v.level> r_u.level+1$ or $r_v.level= \perp$. $r_v$ considers $r_u$ as its parent and the corresponding edge is considered as the edge of the BFS. In case more than one neighbor sends their level simultaneously (in the same round) such that they can be considered as the parent, then the minimum level neighbor is considered as the parent. In case, there is more than one such minimum considerable level send their level then one of them is arbitrarily considered as the parent. The agent informs its neighbors one by one based on port numbering (in increasing order) using the {\texttt{Meeting with Neighbor()}} that takes $O(4 \log \lambda)$. Notice that to meet any neighbor {\texttt{Meeting with Neighbor()}} start the procedure in the round number that is divisible by $4\log \lambda$ and this procedure runs for $2m$ times, after which each agent learns its actual level in the BFS tree. %The correctness of the algorithm is follows from the Lemma~\ref{lem: BFS_correctness}. The pseudocode is given in Algorithm~\ref{alg: BFS_another_approach}, and the complexity analysis of the algorithm is provided in Lemma~\ref{lem: BFS_another_approach}. 

Let us now briefly explain the $O(D\Delta)$ rounds BFS tree construction algorithm. Chand {\it et al.} constructed the BFS tree in $O(D\Delta)$ rounds and $O(\log n)$ bits of memory per agent in~\cite{CKM24}. For the BFS construction, they assumed the dispersed initial configuration and prior knowledge of the root node. For that, first, the algorithm computed the $\Delta$ in $O(D\Delta)$ rounds, and then each level took $O(\Delta)$ rounds to complete a level during the BFS tree construction. Thus, their algorithm required $O(D\Delta)$ rounds. Their approach required $O(\log n)$ bits of memory per agent since each agent stored only a constant number of level-numbers, parent, ID, $\Delta$, etc. After constructing BFS, with the help of a parent port, each agent informed its parent, and eventually, the root got to know about the BFS formation, and the algorithm terminated. For the detailed analysis, refer to~\cite{CKM24}.

Now, we have two algorithms with round complexity $O(m\log n)$ (from Algorithm~\ref{alg: BFS_another_approach}) and $O(D\Delta)$ (from~\cite{CKM24}). We consider the best of these two algorithms for BFS construction. For that, first, we run the $O(D\Delta)$ round algorithm; either that algorithm terminates within $O(m \log n)$ rounds if so then we have the BFS tree. Otherwise, we run the Algorithm~\ref{alg: BFS_another_approach} that terminates in $O(m \log n)$. Thus, we get the best round complexity of these two algorithms. The pseudocode for that is given in Algorithm~\ref{alg: improved_BFS}.%We provide the pseudocode of the combined algorithms for the BFS construction in Algorithm~\ref{alg: improved_BFS} which achieves the desired result.

\begin{algorithm}[ht]
\caption{$O(m \log n)$-Round BFS Construction.}\label{alg: BFS_another_approach}
\begin{algorithmic}[1]
\Require Graph $G$ with a specified root and $m$. % and $\lambda$ in a dispersed configuration of the agents. 
\Ensure BFS tree where each tree edge is identified by an agent. 

\State Let us consider the agent $r_u$ whose level is denoted by $r_u.level$. Initially, every $r_u$ agent has $r_u.level=\perp$ and $r_u.parent \xleftarrow{} \perp$.
\State Consider root agent $r^\star$ such that $r^\star.level=0$.
\State  $r^\star$ sends $r^\star.level$ to its neighbors one-by-one using the meeting neighbors algorithm {\texttt{Meeting with Neighbor()} (Algorithm~\ref{alg:meeting_protocol})}. \label{line: root_level_passing} \Comment{Takes $4 \log \lambda$ rounds}
\For{$2m$ times}
    \If{agent $r_u$ updates its level}\label{line: level_passing}
        \State $r_u$ sends $r_u.level$ to neighbors one-by-one using the {\texttt{Meeting with Neighbor()}}. %]\anis{after updating, will the agent immediately start sending new level to the neighbors? If yes, will \texttt{Meeting with Neighbor()} take care of sync? } 
        \Comment{Takes $4 \log \lambda$ rounds}
    \EndIf
    
   %Let $\ell = \min \{ L(r_v) \mid r_v \in P \}$ where $L(r_v)$ represents the level of a neighbor $r_v$ in the set $P$.

    Let  $\ell = \min \{ L(r_v) \mid r_v \in P \}$  where $L(r_v)$ denotes the level of a neighbor $r_v$ in the set $P$.  

    % \If{$r_u$ receives $\ell$ such that $\ell+1 < r_u.level$ or $r_u.level = \perp$}
    %     \State $r_u.level =  \ell +1$
    %     \State $r_u.parent \leftarrow r_v$ such that $r_v \in P$ where $r_v$ is the arbitrarily chosen neighbor.
    % \EndIf

     \If{$r_u$ receives $\ell$ such that $(\ell + 1 < r_u.level)$ or $(r_u.level = \perp)$}
        \State $r_u.level \gets \ell + 1$
        \State Choose an arbitrary $r_v \in P$ such that $L(r_v) = \ell$
        \State $r_u.parent \gets r_v$
    \EndIf
    
\EndFor
\end{algorithmic}
\end{algorithm}

\begin{algorithm}[htbp]
\caption{Improved BFS Construction.}\label{alg: improved_BFS}
\begin{algorithmic}[1]
\Require Graph $G$ having $n$-nodes and $n$-agents with dispersed configuration.
\Ensure BFS Construction

\State Elect the leader/root by running the MST Algorithm~\ref{alg: MST} and knows the graph parameter, $m, \lambda$ and $\Delta$ in $O(n)$ rounds. (see Section~\ref{sec: MST_construction}).
%\State Root agent, $r^\star$,  from leader election (See~\ref{leader_election}).
\State Root, $r^\star$, runs the BFS algorithm for $8m\log \lambda$ rounds from~\cite{CKM24}.
%\State Each agent $r_v$ informs its neighbor whether $r_v$ got its level. \Comment{Requires $O(\Delta \log \lambda)$ rounds.}
\If{$r^\star$ is unaware of the BFS construction after $8m\log \lambda$}
    \State $r^\star$ runs the Algorithm~\ref{alg: BFS_another_approach} and construct the BFS.
\EndIf
\end{algorithmic}
\end{algorithm}

%\subsection{Proofs of Section~\ref{sec: BFS_improved}}\label{sec: appendix_bfs_proof}
\begin{lemma}\label{lem: BFS_correctness}
     Algorithm~\ref{alg: BFS_another_approach} correctly constructs a BFS tree.
\end{lemma}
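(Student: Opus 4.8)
The plan is to separate the claim into a \emph{safety} part (no agent ever adopts a level smaller than its true BFS distance, and every parent pointer is consistent) and a \emph{liveness} part (every agent eventually settles on exactly its true distance), and then to conclude from these two that the final parent pointers form a BFS tree. Fix notation first: let $d(v)$ denote the hop-distance in $G$ from the root node hosting $r^\star$ to $v$, so the goal is to show $r_v.level = d(v)$ for every agent $r_v$ at termination and that $r_v.parent$ realizes a shortest path.

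For safety I would induct over the temporal sequence of level-update events and maintain the invariant that whenever $r_v.level \neq \perp$ we have $r_v.level \geq d(v)$, together with the companion invariant that whenever $r_v.parent$ is set it points to a neighbor whose level equals $r_v.level - 1$. The base case is the root, whose level is pinned at $0 = d(r^\star)$ and can never be lowered since no value below $0$ is ever transmitted. For the step, an agent sets $r_v.level \leftarrow \ell+1$ only upon receiving $\ell = r_u.level$ from a neighbor $r_u$ in a meeting; by the induction hypothesis $\ell \geq d(u)$, and adjacency gives $d(v) \leq d(u) + 1 \leq \ell + 1$, which is exactly the invariant.

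For liveness I would induct on the distance layer $k = d(v)$, proving that every agent at distance $k$ eventually \emph{stabilizes} at level $k$ and never changes thereafter. Layer $0$ is the root. Assuming all agents at distance at most $k$ have stabilized, take $v$ with $d(v) = k+1$ and a shortest-path neighbor $u$ with $d(u) = k$. Once $r_u$ is stabilized at level $k$, the cyclic schedule of \texttt{Meeting with Neighbor()} (Observation~\ref{lemma:meeting_protocol}) guarantees that $r_u$ meets $r_v$ within its window and transmits $k$; at that moment $r_v.level$ is either $\perp$ or, by safety, at least $k+1$, so the update rule fires precisely when it should and sets $r_v.level \leftarrow k+1$, with the parent pointing to a level-$k$ neighbor. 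Safety then forbids any later value below $k+1$, so $r_v$ is pinned at $k+1$ forever.

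The step I expect to be the main obstacle is showing this propagation completes inside the prescribed $2m$ iterations rather than merely ``eventually.'' My approach here would be a charging argument: the stabilized frontier advances by one distance layer only after the current frontier agents have broadcast their finalized level across all their incident edges, so I would charge each productive transmission of a finalized level to its directed edge and use $\sum_v \delta(v) = 2m$ to bound the total number of meeting windows by $2m$, matching the loop bound in Algorithm~\ref{alg: BFS_another_approach}. Once full stabilization is established, the tree property is immediate: at termination every non-root agent holds exactly one parent pointer to a neighbor one level below it, so following parent pointers strictly decreases the level and must terminate at the root; hence the pointers are acyclic, span all $n$ agents, and every such edge joins consecutive layers $d(v)-1$ and $d(v)$, so it is a BFS-tree edge recognized by the child endpoint.
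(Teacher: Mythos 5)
Your proof is correct, and it takes a genuinely different route from the paper's. The paper argues both properties by contradiction: for level-minimality it assumes some agent $r_v$ ends with a non-minimal level, deduces that some neighbor on a shortest path must have failed to transmit its level, and walks this contradiction back to the root; for acyclicity it observes that a parent cycle would force $r_1.level > r_1.level$. You instead run a forward, invariant-based argument: a safety invariant ($r_v.level \geq d(v)$ whenever set) proved by induction on update events, plus a liveness induction on distance layers showing each layer pins to its true distance, with acyclicity then falling out of the strictly decreasing levels along parent pointers (the same observation as the paper's second case, used constructively rather than by contradiction). Your version is more structured and makes explicit the role of the \texttt{Meeting with Neighbor()} guarantee in the inductive step, which the paper leaves implicit. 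You also fold the ``completes within $2m$ iterations'' bound into the correctness argument via a degree-sum charging scheme; the paper instead separates this into Lemma~\ref{lem: BFS_another_approach}, where it bounds the delay along a single root-to-$v$ shortest path by $\sum \delta \leq 2m$ meeting windows --- essentially the same $2m$ bound, packaged per-path rather than per-frontier, so your decomposition buys a self-contained correctness proof at the cost of duplicating the complexity analysis. One small wrinkle: your companion invariant that $r_v.parent$ always points to a neighbor at level exactly $r_v.level - 1$ can be transiently violated if that neighbor later lowers its own level before $r_v$ re-updates; you should either weaken it to hold only at the moment the pointer is set, or assert it only in the terminal stabilized state (which is all your final tree argument actually needs).
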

\begin{proof}
    We prove two cases to show that Algorithm~\ref{alg: BFS_another_approach} constructs a BFS. i) Each agent $r_v$ is assigned with a minimum level (shortest distance from the root) ii) there does not exist a cycle.
    We prove these cases by contradiction. Let us assume that the level assigned to an agent $r_v$ is not the minimum then in the Algorithm~\ref{alg: BFS_another_approach}, if the level assigned to an agent $r_v$ is not minimum then there exists a neighbor of $r_v$, say $r_u$, such that $r_u.level+1 < r_v.level$ or $r_v.level = \perp$. In both conditions, the neighbor $r_u$ did not pass its level to $r_v$ which is contradictory to the Line~\ref{line: level_passing}. If $r_u$ also does not have the minimum level then one of the neighbors did not pass its level to $r_u$ and so on. Eventually, by induction, we have that $r^\star$ does not pass its level to one of its neighbors which is a contradiction to the Line~\ref{line: root_level_passing}. 
    
    For the second case, let us consider agents $r_1, r_2, r_3, \dots,  r_w$ forms a cycle such that (w.l.g.) $r_2.parent \xleftarrow{}r_1, r_3.parent \xleftarrow{}r_2$, \dots, $r_1.parent \xleftarrow{}r_w$. This implies $r_1.level>r_2.level>r_3.level>\cdots >r_w.level > r_1.level$. This implies there exists more than one parent to an agent which is contradictory.
\end{proof}
\begin{lemma}\label{lem: BFS_another_approach}
    Algorithm~\ref{alg: BFS_another_approach}, constructs a BFS tree in $O(m \log \lambda)$ rounds and requires $O(\log n)$ bits of memory at each agent.
\end{lemma}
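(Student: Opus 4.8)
The plan is to separate the claim into its time and memory components, taking correctness of the resulting tree for granted from Lemma~\ref{lem: BFS_correctness}. For the round complexity, I would first fix the unit of progress: every transmission of a level from an agent to one of its neighbours is realized by a single invocation of \texttt{Meeting with Neighbor()}, which by Observation~\ref{lemma:meeting_protocol} costs exactly $4\log\lambda$ rounds and is aligned to round numbers divisible by $4\log\lambda$. Hence the total running time equals (number of such level-transmission events) $\times\, 4\log\lambda$, and the whole problem reduces to bounding the number of transmission events by $O(m)$.

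To bound the transmissions, I would argue that each agent changes its level at most once. The key structural fact I would establish is that the synchronized $4\log\lambda$-round windows make the flooding propagate exactly one hop per window: in window $k$ precisely the agents whose true BFS distance from $r^\star$ equals $k$ receive their level for the first time. Combined with the update rule near Line~\ref{line: level_passing}, which overwrites a level only when a strictly smaller candidate $\ell+1$ arrives, this means an agent hears from a shortest-path neighbour no later than from any longer-path neighbour, so the first (and only) value it adopts is already its final BFS level. Consequently each agent (including $r^\star$, which transmits its level $0$ once) broadcasts to its neighbours exactly once after settling, producing $\delta_u$ transmission events at agent $u$. Summing over all agents and applying the handshake identity $\sum_{u} \delta_u = 2m$ gives at most $2m$ transmission events, which is exactly the bound used in the loop guard \texttt{for $2m$ times} of Algorithm~\ref{alg: BFS_another_approach}. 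Multiplying by the per-event cost yields $2m \cdot 4\log\lambda = O(m\log\lambda)$ rounds, and since $\lambda \le n^{c}$ this is also $O(m\log n)$.

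I expect the monotone one-hop-per-window propagation to be the main obstacle, because a naive (asynchronous) flooding could let an agent adopt an over-long level first and then revise it repeatedly, inflating the number of transmissions well beyond $2m$. The delicate point is to show that the global alignment of every \texttt{Meeting with Neighbor()} call to a common $4\log\lambda$ boundary, together with the fact that an agent forwards only after it has itself updated, forbids a far neighbour from delivering a level to an agent before a nearer neighbour does. I would formalise this by induction on the distance $k$, proving the invariant \emph{``at the start of window $k$, exactly the agents at BFS distance less than $k$ have their final level and have already transmitted once,''} which simultaneously certifies that the $2m$-iteration loop is long enough to let the wave reach the farthest agent.

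Finally, for the memory bound I would enumerate the per-agent state: the current level (an integer in $[0,n-1]$, hence $O(\log n)$ bits), the parent port (in $[0,\Delta-1]$, i.e. $O(\log\Delta)=O(\log n)$ bits), and the data used by \texttt{Meeting with Neighbor()} — the padded identifier, its complement-concatenated copy $new\_ID$ of length $2\log\lambda$, and a round counter — all of which occupy $O(\log\lambda)=O(\log n)$ bits. Since no agent stores any per-neighbour table, the total is $O(\log n)$ bits, completing the proof.
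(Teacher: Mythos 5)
Your memory accounting is fine, but the time bound rests on a structural claim that is false for this algorithm. You assert that the synchronized $4\log\lambda$-round windows make the flooding advance exactly one hop per window, so that each agent receives its true BFS level on the first delivery and therefore updates (and re-broadcasts) at most once. That cannot hold: after an agent $r_u$ settles on a level it informs its $\delta_u$ neighbours \emph{sequentially, one per window, in port order}, so the neighbour sitting at $r_u$'s last port is not reached until roughly $\delta_u$ windows later. In the meantime that neighbour may well adopt a too-large level from some other agent that happens to have it at a low port, and then revise downward (the update rule $\ell+1 < r_u.level$ exists precisely to allow this). So your invariant ``at the start of window $k$ exactly the agents at distance $<k$ have their final level'' fails already at $k=2$ whenever the root has degree $\ge 2$, and the count of $2m$ transmission events via ``each agent broadcasts exactly once'' is not justified. (Indeed, if one-hop-per-window were true the algorithm would finish in $O(D\log\lambda)$ rounds, which is clearly too strong.)

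The paper closes this gap differently: rather than counting transmissions globally, it bounds, for each agent $r_v$, the delay along a shortest path from $r^\star$ to $r_v$. Each node $r_w$ on that path forwards its (final) level to the next node on the path within at most $4\log\lambda\cdot\delta_w$ rounds of settling, because it only has to cycle once through its ports. Telescoping along the path, the total delay is at most $4\log\lambda\cdot\sum_w \delta_w \le 4\log\lambda\cdot 2m = 8m\log\lambda$ rounds, since the degree sum over any set of nodes is at most $2m$. Note that this still yields only the $O(m\log\lambda)$ bound of the lemma, not a per-agent single-update property. If you want to salvage your counting approach you would instead need to bound the total number of level \emph{updates} (not one per agent), which requires a separate argument; the path-delay argument avoids that entirely.
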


\begin{proof}
    Let us assume there exists an agent $r_v$ that requires $8m\log \lambda + t$ rounds to get the minimum level, for any $t>0$. $r_v.parent$'s, say $r_u$, took at most $4\log \lambda \cdot \delta_u$ rounds to send the minimum level to $r_v$. Similarly, $r_u.parent$'s, say $r_w$, took at most $4\log \lambda \cdot \delta_w$ rounds to send the minimum level to $r_u$ and so on. Therefore, the total number of edges, in the shortest path, from $r^\star$ to $r_v$ is not more than the sum of the degree of $r_v.parent$'s and their parent, recursively, up to $r^\star$. As a result, this degree sum is not more than $2m$.  Recall that communication between any two neighboring agents takes not more than $4\log \lambda$ rounds \ref{alg:meeting_protocol}. Consequently, the total number of rounds required to send the minimum level to any of the agents is not more than $8m \log \lambda$ which is a contradiction. In the case of memory per agent, the Algorithm~\ref{alg: BFS_another_approach} stores only its parent's port number and level which does not require more than $O(\log n)$ bits at each agent. Hence, the lemma.
\end{proof}

\begin{remark}
    For an arbitrary initial configuration\footnote{An initial configuration in which neither all the agents are at a single node nor it is a dispersed configuration.}, agents achieve dispersion in $O(n \log^2 n)$ rounds~\cite{sudo24}. Therefore, if $n$ is known then after $O(n\log^2 n)$ rounds we have the dispersed setting. In the dispersed setting, the round complexity for BFS construction is $O(\min(D\Delta, m \log n) + n \log n + \Delta \log^2 n)$ (as shown in Theorem~\ref{theorem: BFS}). Thus, the overall round complexity for an arbitrary initial configuration BFS construction becomes $O(\min(D\Delta, m \log n) + n \log^2 n)$ while the round complexity remains unchanged.

\end{remark}
% \begin{algorithm}[H]
% \caption{Improved BFS Construction.}\label{alg: improved_BFS}
% \begin{algorithmic}[1]
% \Require Graph $G$ having $n$-nodes and $n$-agents with dispersed configuration.
% \Ensure BFS Construction

% \State Elect the leader/root by running the MST Algorithm~\ref{alg: MST} and knows the graph parameter, $m, \lambda$ and $\Delta$ in $O(n)$ rounds. (see Section~\ref{sec: MST_construction}).
% %\State Root agent, $r^\star$,  from leader election (See~\ref{leader_election}).
% \State Root, $r^\star$, runs the BFS algorithm for $8m\log \lambda$ rounds from~\cite{CKM24}.
% %\State Each agent $r_v$ informs its neighbor whether $r_v$ got its level. \Comment{Requires $O(\Delta \log \lambda)$ rounds.}
% \If{$r^\star$ is unaware of the BFS construction after $8m\log \lambda$}
%     \State $r^\star$ runs the Algorithm~\ref{alg: BFS_another_approach} and construct the BFS.
% \EndIf
% \end{algorithmic}
% \end{algorithm}
%The agent $r\star$ informs about its level in $O(\log n)$ rounds to its every neighboring agent one by one and their level is increased by 1. The neighboring agents at level 1 consider the root as their parent. Furthermore, the agent that received their level repeats the process recursively and informs its neighboring agent of their level. If the agent at the receiver end has a higher level then it updates its level with a lower level and accordingly updates its parents otherwise ignore that agent's information.

From the above discussion and BFS construction in~\cite{CKM24}. We have the following theorem.

\begin{theorem}[BFS Tree]\label{theorem: BFS}
    Given a dispersed configuration of $n$ agents with unique identifiers positioned one agent at every node of $n$ nodes $m$ edges, maximum degree $\Delta$, and diameter $D$ graph $G$ with no node identifier. Then, there is a deterministic algorithm that constructs a BFS tree in $O(\min(D\Delta, m \log n) + n \log n + \Delta \log^2 n)$ rounds and $O(\log n)$ bits per agent. The algorithm requires no prior knowledge of any graph parameters and assumes that each agent knows only $\lambda$, the maximum among all agent identifiers.
    
\end{theorem}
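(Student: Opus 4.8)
The plan is to assemble the theorem from three already-established ingredients glued together by the ``best-of-two'' wrapper of Algorithm~\ref{alg: improved_BFS}: the MST/leader-election routine of Theorem~\ref{theorem: MST} and Theorem~\ref{theorem: Leader}, the $O(m\log n)$-round BFS routine analyzed in Lemma~\ref{lem: BFS_another_approach}, and the $O(D\Delta)$-round BFS routine of~\cite{CKM24}. I would organize the argument into a preprocessing phase and a BFS phase, prove correctness from the existing lemmas, and then add the complexities.

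First I would handle preprocessing. Since the agents start dispersed and know only $\lambda$, there is initially no root and no knowledge of $m$ or $\Delta$, so I would invoke Algorithm~\ref{alg: MST} to elect a unique leader $r^\star$ to serve as the BFS root. Piggybacking on the same parent--child communication primitive of Remark~\ref{rem:oscilate}, the agents aggregate the global parameters $m$ and $\Delta$ at $r^\star$ and broadcast them back through the spanning tree. By Theorem~\ref{theorem: MST} this costs $O(\Delta\log^2 n + n\log n)$ rounds and $O(\log n)$ bits per agent, and the parameter aggregation adds only $O(n)$ further rounds, so the phase stays within the same asymptotic bound.

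Next I would analyze the BFS phase driven by Algorithm~\ref{alg: improved_BFS}. The root runs the $O(D\Delta)$ construction of~\cite{CKM24} for a budget of exactly $8m\log\lambda$ rounds, and I would split into two cases. If $D\Delta = O(m\log n)$, the~\cite{CKM24} routine completes and $r^\star$ detects completion within the budget (completion signals travel up the constructed tree via parent pointers), so the phase costs $O(D\Delta)$. If instead the budget elapses without completion, the root discards that attempt and runs Algorithm~\ref{alg: BFS_another_approach}, which by Lemma~\ref{lem: BFS_another_approach} finishes in $O(m\log n)$ rounds; since the wasted budget $8m\log\lambda$ is itself $O(m\log n)$, the phase costs $O(m\log n)$. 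In either case the BFS phase is $O(\min(D\Delta, m\log n))$ rounds, correctness of the produced tree follows from Lemma~\ref{lem: BFS_correctness} (every agent attains its minimum level and no cycle forms), and both routines use $O(\log n)$ bits. Adding the two phases gives $O(\Delta\log^2 n + n\log n) + O(\min(D\Delta, m\log n)) = O(\min(D\Delta, m\log n) + n\log n + \Delta\log^2 n)$ with per-agent memory $O(\log n)$ throughout, as claimed.

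The main obstacle I expect is not the arithmetic but justifying the ``best-of-two'' combination cleanly. I must argue that (i) the root can reliably sense, within the fixed $\Theta(m\log\lambda)$ window, whether the~\cite{CKM24} routine has terminated, which requires a termination-detection sweep up the tree that is absorbed into the $O(\min(D\Delta, m\log n))$ bound, and (ii) aborting the first attempt and reinitializing for Algorithm~\ref{alg: BFS_another_approach} can be done without exceeding $O(\log n)$ memory, since each agent need only reset a constant number of fields, namely its level and parent pointer. A secondary care point is that the $2m$-iteration bound in Lemma~\ref{lem: BFS_another_approach} relies on the level reaching every agent along a shortest path whose edge count is bounded by the degree-sum $2m$, so I would verify that the level-propagation in Algorithm~\ref{alg: BFS_another_approach} is correctly interleaved with the $4\log\lambda$-round \texttt{Meeting with Neighbor()} windows before concluding.
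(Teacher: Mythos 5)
Your proposal is correct and follows essentially the same route as the paper: elect the root and gather $m$, $\Delta$ via the MST construction of Theorem~\ref{theorem: MST}, then run the $O(D\Delta)$ algorithm of~\cite{CKM24} for a fixed $8m\log\lambda$-round budget and fall back to Algorithm~\ref{alg: BFS_another_approach} (Lemmas~\ref{lem: BFS_correctness} and~\ref{lem: BFS_another_approach}) if it does not finish, summing the phases to the claimed bound. Your two flagged care points (root-side termination detection within the budget, and clean reinitialization in $O(\log n)$ memory) are in fact glossed over by the paper, so attending to them only strengthens the argument.
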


% \begin{remark}
%     For an arbitrary initial configuration\footnote{An initial configuration in which neither all the agents are at a single node nor it is a dispersed configuration.}, agents achieve dispersion in $O(n \log^2 n)$ rounds~\cite{sudo24}. Therefore, if $n$ is known then after $O(n\log^2 n)$ rounds we have the dispersed setting. In the dispersed setting, the round complexity for BFS construction is $O(\min(D\Delta, m \log n) + n \log n + \Delta \log^2 n)$ (as shown in Theorem~\ref{theorem: BFS}). Thus, the overall round complexity for an arbitrary initial configuration BFS construction becomes $O(\min(D\Delta, m \log n) + n \log^2 n)$ while the round complexity remains unchanged.

% \end{remark}

\section{Conclusion and Future Works}

In this paper, we analyze the complexity of constructing a BFS tree using mobile agents, improving upon \cite{CKM24} in two key ways: (1) eliminating the need for a pre-designated root and (2) reducing the round complexity from $O(D\Delta)$ to $O(\min(D\Delta, m \log n) + n \log n + \Delta \log^2 n)$. We also develop a Minimum Spanning Tree (MST) protocol for gathering graph parameters and electing a leader for BFS root selection, significantly improving time and memory efficiency over \cite{KKMS24}.  Both constructions leverage a new neighbor-meeting protocol, allowing adjacent agents to meet in $O(\log n)$ time without global graph parameter knowledge. This eliminates the need for the $\Delta$ assumption, and this may further reduce time complexity in existing solutions.  Future work includes exploring neighbor-meeting in models where ID sizes are unknown, establishing round lower bounds for agent-based BFS, and analyzing fault tolerance under crash and Byzantine failures, given their real-world relevance.

%Future work could explore the lower bounds of these computations, particularly examining if agents can meet their neighbors in constant rounds or without using the parameter $\Lambda$. It would also be worthwhile to investigate whether BFS trees can be constructed independently of spanning trees. Furthermore, analyzing this problem in the presence of faulty agents—both crash and Byzantine faults—would be valuable, as faults are inevitable in real-world applications

\bibliographystyle{plain}%\small
\bibliography{reference}

\end{document}